\newtheorem{lemma}{Lemma}
\newtheorem{theorem}{Theorem}
\newtheorem{remark}{Remark}
\begin{document}

\title{Optimal Multiuser Loading in Quantized Massive MIMO under Spatially Correlated Channels}


\author{\small
\IEEEauthorblockN{Jindan Xu$^{\ast}$,~\emph{Student Member,~IEEE}, Wei Xu$^{\ast}$,~\emph{Senior Member,~IEEE}, Fengkui Gong$^{\dag}$,~\emph{Member,~IEEE}, \\
 Hua Zhang$^{\ast}$,~\emph{Member,~IEEE}, and Xiaohu You$^{\ast}$,~\emph{Fellow,~IEEE}}\\
\IEEEauthorblockA{
$^{\ast}$National Mobile Communications Research Laboratory, Southeast University, Nanjing 210096, China\\
$^{\dag}$State Key Laboratory of Integrated Service Networks, Xidian University, Xi'an 710071, China\\
Email: \{jdxu, wxu, huazhang, xhyu\}@seu.edu.cn, fkgong@xidian.edu.cn}

\thanks{
Part of this work was presented in \cite{part} at the IEEE VTC-Fall in Toronto, Canada, Sept. 2017.
}
}

\maketitle


\begin{abstract}
Low-resolution digital-to-analog converter (DAC) has shown great potential in facilitating cost- and power-efficient implementation of massive multiple-input multiple-output (MIMO) systems.
We investigate the performance of a massive MIMO downlink network with low-resolution DACs using regularized zero-forcing (RZF) precoding.
It serves multiple receivers equipped with finite-resolution analog-to-digital converters (ADCs).
By taking the quantization errors at both the transmitter and receivers into account under spatially correlated channels, the regularization parameter for RZF is optimized with a closed-form solution by applying the asymptotic random matrix theory.
The optimal regularization parameter increases linearly with respect to the user loading ratio while independent of the ADC quantization resolution and the channel correlation.
Furthermore, asymptotic sum rate performance is characterized and a closed-form expression for the optimal user loading ratio is obtained at low signal-to-noise ratio.
The optimal ratio increases with the DAC resolution while it decreases with the ADC resolution.
Numerical simulations verify our observations.
\end{abstract}

\begin{IEEEkeywords}
Massive multiple-input multiple-output (MIMO), digital-to-analog converter (DAC), analog-to-digital converter (ADC), spatial correlation, user loading ratio.
\end{IEEEkeywords}

\IEEEpeerreviewmaketitle

\section{Introduction}

Massive multiple-input multiple-output (MIMO) has gained significant attention as a candidate technique for the next generation wireless system \cite{MIMO0}-\cite{MIMO3}. In massive MIMO, a large amount of antennas equipped at base station (BS) can provide high spectral and energy efficiencies \cite{MIMO2}.
Despite these merits of massive MIMO, it suffers from a challenging issue of high cost and power consumption for applications even at the BS.
This is due to the fact that each antenna has to be driven by a separate radio-frequency (RF) chain.

Considering that the power consumption of each RF chain can decrease dramatically by reducing the resolutions of digital-to-analog converters (DACs), one of the potential solutions is to employ low-resolution DACs for downlink transmissions \cite{DAC_PSK}-\cite{DAC2}.
More specifically in \cite{DAC_PSK}-\cite{DAC_pert}, nonlinear precoding schemes were proposed for massive MIMO downlink transmission with low-resolution DACs.
In \cite{DAC_PSK}, a novel precoding technique using 1-bit DACs was presented to mitigate multiuser interference and quantization distortions.
In \cite{DAC_pokemon}, a computationally-efficient 1-bit beamforming algorithm, named POKEMON (short for PrOjected downlinK bEaMfOrmiNg), was proposed.
The authors of \cite{DAC_pert} studied perturbation methods which minimize the probability of errors at receivers in a massive MIMO downlink with 1-bit DAC.
Alternatively, linear precoding techniques were studied for the downlink transmissions using low-precision DACs \cite{DAC_downlink}-\cite{DAC2}.
A linear precoding approach was studied in \cite{DAC_downlink}, where the output data of conventional linear precoders were directly quantized by low-resolution DACs.
Considering 1-bit DACs and zero-forcing (ZF) precoder, \cite{DAC1} analyzed the system performance for massive MIMO downlink transmission.
While in \cite{DAC2}, the system performance was investigated for the downlink using  multi-bit DACs with common precoding schemes, including regularized zero-forcing (RZF) and maximal-ratio-combining (MRC) precoders.

As for uplink channels, low-resolution analog-to-digital converters (ADCs) were adopted to reduce the hardware and power cost \cite{ADC1}-\cite{ADC Mix}.
The impact of 1-bit ADC on channel estimation was considered in \cite{ADC1} and satisfactory system performance was observed in terms of both symbol error rate (SER) and mutual information.
The performance analysis was then extended in \cite{ADC2} for a multiuser relay network using mixed ADCs.
Spectral efficiency of massive MIMO with low-resolution ADCs was studied in \cite{ADC} while an extension to mixed-ADC architecture was then conducted in \cite{ADC Mix} under energy constraint.

Massive MIMO embraces an essential advantage of serving multiple users efficiently via multiuser beamforming.
The popular problem is how many users are served in order to achieve the optimal system performances.
User loading ratio, namely the ratio of the number of users simultaneously served over the antenna number, has been well recognized as a cited parameter in massive MIMO design.
It was optimized in \cite{load} that the user loading number is a complicate function of signal-to-noise ratio (SNR) and other system parameters.
Furthermore, the study in \cite{load1} found that as SNR increases, the optimal user loading decreases at low SNR, but increases at high SNR.
For a large-antenna system, the optimal user loading ratio was obtained in \cite{load2} by using the asymptotic random matrix theory.
The authors in \cite{load4} presented two user grouping methods for a frequency-division-duplexing (FDD) massive MIMO system.
While for a massive MIMO network with DAC and ADC quantizations at both sides, however, the balance of user loading of interest is still unknown for performance enhancement.

Moreover, spatial correlation at the BS side generally exists in massive MIMO since the large number of antennas could not be set far away enough from each other due to the restriction of device size.
It was revealed in \cite{cor5} that the capacity decreased about 20 percent with a separation of four wavelengths between adjacent antennas, compared to an uncorrelated MIMO channel assumed in most of the above literature.
Separable correlation model for a single-user communication system was studied in \cite{cor6}.
A more general correlation model, i.e., Unitary-Independent-Unitary (UIU) model, was introduced in \cite{cor7}.
By studying the distribution of the eigenvalues of a correlated MIMO channel matrix, closed-form approximations of asymptotic channel capacity were derived in \cite{cor3},\cite{cor4}.
Considering the transmit-side channel correlation, large system analysis for a downlink multiuser broadcast channel was conducted in~\cite{cor0}.
Using low-resolution ADCs at the BS, uplink performance of a massive MIMO network with spatial correlations was studied in \cite{cor2}.

Most of the existing studies focused on low-resolution converters at BS, i.e., few-bit DACs for downlink or ADCs for uplink.
In this paper, we investigate the joint effect of low-resolution DACs at BS along with finite-resolution ADCs at users, extending the previous conference paper in  \cite{part}.
We adopt a separable spatial correlation model, which is well-known as the Kronecker model \cite{cor6}, to characterize the spatial correlation at the BS.
At the user sides, the correlation is not considered under the assumption that single-antenna users lie far enough from each other in a rich scattering environment.
Perfect channel state information (CSI) is assumed known at the BS for analytical simplicity, considering there are already a number of studies, i.e., \cite{CE_1}-\cite{CE_3}, specifically dealing with the problem of channel estimation.
For performance enhancement, we first optimize the regularization parameter for RZF precoding design using the asymptotic random matrix theory.
Given the optimal RZF precoder, we then analyze the asymptotic sum rate performance of the massive MIMO system.
By maximizing the sum rate per antenna, the optimal user loading ratio is further derived.
Especially for low SNRs, deemed as an important scenario in massive MIMO, we obtain a closed-form solution of the optimal user loading ratio.
Engineering insights for the optimal ratio are accordingly concluded.

The rest of this paper is structured as follows. Quantization and channel models are introduced in Section~\uppercase\expandafter{\romannumeral2}.
In Section~\uppercase\expandafter{\romannumeral3}, we derive the optimal regularization parameter, asymptotic achievable rate, and the optimal user loading ratio under spatially correlated channels. Section~\uppercase\expandafter{\romannumeral4} analyzes a special case of uncorrelated channels and gives a closed-form solution of the optimal user loading for low SNR.
Section~\uppercase\expandafter{\romannumeral5} presents simulation results.
Conclusions are drawn in Section~\uppercase\expandafter{\romannumeral6}.

\emph{Notations}: $\mathbf{A}^T$, $\mathbf{A}^*$, and $\mathbf{A}^H$ represent the transpose, conjugate, and conjugate transpose of $\mathbf{A}$, respectively.
$\textrm{Tr}\{\mathbf{A}\}$ denotes the trace of $\mathbf{A}$ and $\textrm{diag}(\mathbf{A})$ keeps only the diagonal entries of $\mathbf{A}$.
$\mathbb{E}\{\cdot\}$ is the expectation operator. $\xrightarrow{a.s.}$ denotes the almost sure convergence.

\begin{figure*}[tb]
\centering\includegraphics[width=0.97\textwidth]{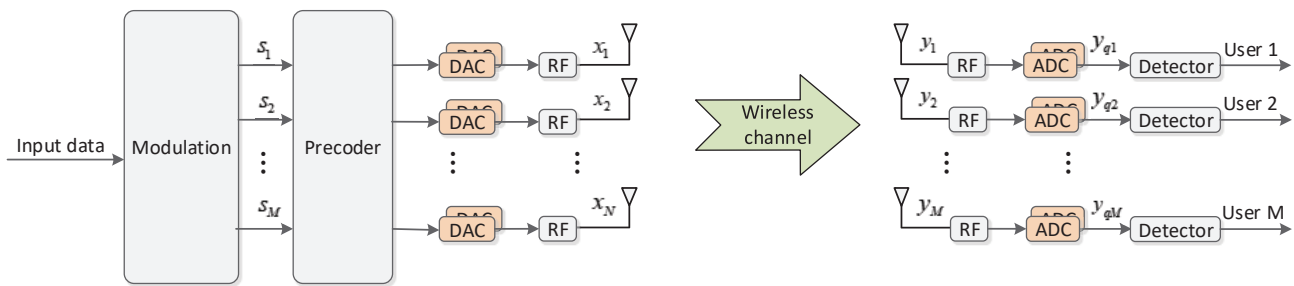}
\caption{Block diagram of multiuser massive MIMO downlink.}
\label{block}
\end{figure*}

\section{System Model}

We consider a multiuser massive MIMO downlink network as illustrated in Fig. \ref{block}. The BS equips $N$ antennas and it simultaneously serves $M$ users.
In order to reduce power consumption for driving the massive antenna array, each antenna at the BS connects to a pair of low-resolution DACs to separately process real and imaginary parts of complex signals.
Each user equips a single antenna connected to a pair of finite-resolution ADCs.
Although it is in general difficult to exactly characterize the nonlinear operation of both DACs and ADCs, the well-known Bussgang theorem \cite{Bus1},\cite{Bus2} is widely used to approximately decompose the quantized data into two uncorrelated parts.
One part represents the linearly distorted signal and the other is quantization noise.
Let $\mathbf{s}$ be the Gaussian source data with $\mathbb{E}\{\mathbf{s}\mathbf{s}^{H}\}=\mathbf{I}_{M}$ and $\mathbf{P}\in \mathbb{C}^{N\times M}$ denote the precoding matrix with $\textrm{Tr}\{\mathbf{P}\mathbf{P}^{H}\}=P$ where $P$ amounts to the transmit power budget.
By applying the Bussgang theorem, the transmit signal after DAC conversion can be modeled as \cite{ADC3}
\begin{equation}
\label{DAC}
\mathbf{x}=\sqrt{1-\rho_{DA}}\mathbf{Ps}+\mathbf{n}_{DA},
\end{equation}
where $\rho_{DA}\in (0,1)$ denotes the distortion factor which reflects the number of DAC quantization bits $b_{DA}$, and $\mathbf{n}_{DA}$ represents the quantization noise which satisfies
\begin{equation}
\begin{aligned}
\label{cor_DA}
\mathbb{E}\{\mathbf{n}_{DA}\mathbf{n}_{DA}^H\}&=\rho_{DA}\mathbb{E}\big\{\textrm{diag}\big(\mathbf{Ps}\mathbf{s}^H\mathbf{P}^H\big)\big\}
=\rho_{DA}\textrm{diag}\big(\mathbf{P}\mathbf{P}^H\big).
\end{aligned}
\end{equation}
As in most existing works, e.g., \cite{ADC2}, \cite{Bus2}-\cite{Mixed}, we consider the optimal non-uniform quantization model since it provides a tractable and effective way of well characterizing the quantization performance.
For uniform quantizers, the quantization model is more involved \cite{DAC2}.
For performance analysis, it has been shown that the resulting performance difference is marginal, by applying the two models especially for popular quantization levels.

At user sides, the stacked received vector, $\mathbf{y}\in \mathbb{C}^{M\times 1}$, of all $M$ users equals
\begin{equation}
\label{y}
\mathbf{y}=\mathbf{H}\mathbf{x}+\mathbf{n},
\end{equation}
where $\mathbf{n}\sim \mathcal{CN}(\mathbf{0},\sigma_n^2\mathbf{I}_{M})$ represents the additive white Gaussian noise (AWGN), and $\mathbf{H}\in \mathbb{C}^{M\times N}$ denotes the downlink channel with spatial correlation among transmit antennas, which can be further expressed as
\begin{equation}
\label{H}
\mathbf{H}=\mathbf{D}\tilde{\mathbf{H}}\mathbf{R}^{\frac{1}{2}},
\end{equation}
where $\mathbf{D}\in \mathbb{C}^{M\times M}$ is a diagonal matrix and the diagonal entries denote large-scale fadings, the entries of $\tilde{\mathbf{H}}\in \mathbb{C}^{M\times N}$ are assumed independent and identically distributed complex Gaussian random variables with zero mean and unit variance, and $\mathbf{R}$ denotes a transmit correlation matrix \cite{cor1}.
Assume that power control based on statistical channel-inverse is conducted to compensate large-scale fadings of different users, as done in \cite{Power}.
We consider a normalized channel model with $\mathbf{D}=\mathbf{I}_M$.
Typically, $\mathbf{R}$ is Hermitian and positive definite with $\textrm{Tr}\{\mathbf{R}\}=N$.
We assume that each user experiences the same transmit correlation since it heavily relies on the antenna array and scattering distribution at BS \cite{cor0},\cite{cor2}.
Similar to DACs, the output data vector after the ADC quantization can be decomposed according to the Bussang Theorem, as follows
\begin{equation}
\label{ADC}
\mathbf{y}_{q}=(1-\rho_{AD})\mathbf{y}+\mathbf{n}_{AD},
\end{equation}
where $\rho_{AD}\in (0,1)$ is the distortion factor of ADC quantization, and $\mathbf{n}_{AD}$ is the quantization error, which can be expressed as
\begin{equation}
\label{cor_AD}
\mathbb{E}\big\{\mathbf{n}_{AD}\mathbf{n}_{AD}^H\big\}=\rho_{AD}(1-\rho_{AD})\mathbb{E}\big\{\textrm{diag}\big(\mathbf{y}\mathbf{y}^H\big)\big\}.
\end{equation}
Compared to DACs, an additional scalar factor, $\sqrt{1-\rho_{AD}}$, is multiplied for ADCs and we have $\mathbb{E}\{\textrm{Tr}\{\mathbf{y}_q\mathbf{y}_q^{H}\}\}=(1-\rho_{AD})\mathbb{E}\{\textrm{Tr}\{\mathbf{y}\mathbf{y}^{H}\}\}$.
Power normalization is not a necessity here at the receiver because it does not affect the equivalent signal-to-interference-quantization-and-noise ratio (SIQNR) \cite{ADC3}.
The minimum distortion parameters for the optimal quantization has been studied in \cite{rho} and the typical values of $\rho_{DA}$ and $\rho_{AD}$ corresponding to various quantization resolutions are exemplified in Table \ref{table_rho}.
In particular for the special resolution of 1-bit quantization, the author in \cite{Bus1} has derived that $\rho_{DA} (\rho_{AD})=1-\frac{2}{\pi}$, which is exactly the same as in Table~\ref{table_rho} for this special case.
In the condition of moderate to high-resolution quantizations, we have $\rho_{AD}\approx\frac{\pi\sqrt{3}}{2}2^{-2b_{AD}}$ \cite{Bus2}.
The calculated values for $b_{AD}\geq3$ are verified, approximately the same as these in Table~\ref{table_rho}.

\begin{table}[tb]
\centering
\caption{Values of quantization distortion factors $\rho_{DA}$ and $\rho_{AD}$~\cite{rho}}
\label{table_rho}
\begin{IEEEeqnarraybox}[\IEEEeqnarraystrutmode\IEEEeqnarraystrutsizeadd{2pt}{1pt}]{v/c/v/c/v/c/v/c/v/c/v/c/v}
\IEEEeqnarrayrulerow\\
& \mbox{\!$b_{DA} (b_{AD})$ \!} && \!\mbox{\!1\!} && \mbox{\!2\!} && \mbox{\!3\!} && \mbox{\!4\!} && \mbox{\!5\!} &\\
\IEEEeqnarrayseprow[1pt]\\
\IEEEeqnarrayrulerow\\
\IEEEeqnarrayseprow[4pt]\\
& \mbox{\!$\rho_{DA} (\rho_{AD})$\!}  && \!\mbox{\!0.3634\!} && \mbox{\!0.1175\!} && \mbox{\!0.03454\!} && \mbox{\!0.009497\!} && \mbox{\!0.002499\!} &\IEEEeqnarraystrutsize{0pt}{0pt}\\
\IEEEeqnarrayseprow[3pt]\\
\IEEEeqnarrayrulerow
\end{IEEEeqnarraybox}
\end{table}

Now substituting \eqref{DAC} and \eqref{y} into \eqref{ADC}, the received data vector after ADC quantization equals
\begin{equation}
\begin{aligned}
\label{yq}
\mathbf{y}_{q}=&(1-\rho_{AD})\sqrt{1-\rho_{DA}}\mathbf{HPs}+(1-\rho_{AD})\mathbf{Hn}_{DA}
+\mathbf{n}_{AD}+(1-\rho_{AD})\mathbf{n}.
\end{aligned}
\end{equation}
Let $\mathbf{h}_{k}^{T}$ and $\mathbf{p}_{k}$ denote the $k$th row of $\mathbf{H}$ and the $k$th column of $\mathbf{P}$, respectively. Then, the received signal of the $k$th user can be extracted from \eqref{yq} as
\begin{align}
\label{yk}
y_{k} = &(1-\rho_{AD})\sqrt{1-\rho_{DA}}\mathbf{h}_{k}^{T}\mathbf{p}_k s_k
 +(1-\rho_{AD})\sqrt{1-\rho_{DA}}\sum_{j\neq k}\mathbf{h}_{k}^{T} \mathbf{p}_j s_j
\nonumber
\\
& +(1-\rho_{AD})\mathbf{h}_{k}^{T}\mathbf{n}_{DA}+n_{AD,k}+(1-\rho_{AD})n_{k},
\end{align}
where $s_k$, $n_{AD,k}$, and $n_{k}$ are, respectively, the $k$th elements of $\mathbf{s}$, $\mathbf{n}_{AD}$, and $\mathbf{n}$. In \eqref{yk}, the first summation term is the desired signal of user $k$ while the second term represents multiuser interference. The third and forth terms come from the DAC and ADC quantization distortions, respectively.
Then, the equivalent SIQNR of the $k$th user, $\gamma_k$, can be readily characterized as
\begin{equation}
\label{SIQNR}
\gamma_k=\frac{(1-\rho_{AD})^2(1-\rho_{DA})\left|\mathbf{h}_{k}^{T}\mathbf{p}_k\right|^2}
{(1-\rho_{AD})^2(1-\rho_{DA})\sum\limits_{j\neq k}\left|\mathbf{h}_{k}^{T}\mathbf{p}_j\right|^2+(1\!-\!\rho_{AD})^2\mathbf{h}_{k}^{T}\mathbb{E}\left\{\mathbf{n}_{DA}\mathbf{n}_{DA}^H\right\}\mathbf{h}_{k}^{*}
+\mathbb{E}\left\{|n_{AD,k}|^2\right\}\!+\!(1\!-\!\rho_{AD})^2\sigma_n^2}.
\end{equation}

\section{Asymptotic Performance Analysis}

In this section, we analyze the performance of the massive MIMO systems using the asymptotic random matrix theory \cite{matrix}, assuming both $N$ and $M$ simultaneously grow large while the user loading ratio, $\beta=\frac{M}{N}$, remains invariant.
Transmit-side channel correlation is considered and the impacts of both DAC and ADC quantizations are analyzed.

\subsection{Optimal Regularization Parameter}

Without loss of generality, we first characterize the asymptotic behavior of SIQNR at user $k$. Consider a RZF precoder as
\setcounter{equation}{9}
\begin{equation}
\label{rzf}
\mathbf{P}=c\big(\mathbf{H}^H\mathbf{H}+\alpha\mathbf{I}_N\big)^{-1}\mathbf{H}^H,
\end{equation}
where $\alpha$ is the regularization parameter and $c$ is a constant guaranteeing the power constraint as
\begin{equation}
\label{c}
c=\sqrt{\frac{P}{\textrm{Tr}\left\{\mathbf{H}\big(\mathbf{H}^H\mathbf{H}+\alpha\mathbf{I}_N\big)^{-2}\mathbf{H}^H\right\}}}.
\end{equation}
Note that a typical value of $\alpha$ is $\frac{\sigma_n^2M}{P}$ in conventional MIMO systems \cite{load1},\cite{load3}.
For the massive MIMO with low-resolution DACs at BS and finite-resolution ADCs at users, however, $\alpha$ can be further optimized as will be discussed later.
Before that, we present the asymptotic behavior of SIQNR with a Toeplitz correlation matrix as given in the following theorem.

\begin{theorem}
\label{theorem_gamma}
The asymptotic SIQNR, $\gamma$, can be expressed as
\begin{equation}
\begin{aligned}
\label{asy_SIQNR_c}
\gamma = \frac{(1-\rho_{AD})(1-\rho_{DA})\xi\left[E_{22}+\frac{\rho}{\beta}(1+\xi)^2E_{12}\right]\gamma_0}
{\rho_{AD}(1-\rho_{DA}) \xi\left[E_{22}+\frac{\rho}{\beta}(1+\xi)^2E_{12}\right]\gamma_0+\rho_{DA}(1+\xi)^2E_{12}\gamma_0+(1-\rho_{DA})E_{22}\gamma_0+(1+\xi)^2E_{12}},
\end{aligned}
\end{equation}
where $\gamma_0=\frac{P}{\sigma_n^2}$ denotes the average system SNR,
$\xi$ is defined as the unique solution of the following function
\begin{align}
\label{xi}
\xi
&=\mathbb{E}_{\lambda} \left \{ \frac{\lambda(1+\xi)}{\rho(1+\xi)+\beta\lambda} \right \},
\end{align}
where $\rho\triangleq\frac{\alpha}{N}$ is defined as the normalized regularization parameter, $\lambda$ denotes the eigenvalue of $\mathbf{R}$,
and $E_{ij}$ is defined as
\begin{align}
\label{E}
E_{ij} \triangleq \mathbb{E}_{\lambda} \left \{ \frac{\lambda^i}{[\rho(1+\xi)+\beta\lambda]^j} \right\}.
\end{align}
\end{theorem}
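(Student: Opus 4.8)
The plan is to carry the per-user SIQNR in \eqref{SIQNR} to the large-system limit by writing all quantities through the resolvent $\mathbf{Q}\triangleq(\mathbf{H}^H\mathbf{H}+\alpha\mathbf{I}_N)^{-1}$, so that from \eqref{rzf} the columns read $\mathbf{p}_k=c\mathbf{Q}\mathbf{h}_k^*$. First I would isolate the five contributions in \eqref{SIQNR}: the signal $S=|\mathbf{h}_k^T\mathbf{p}_k|^2$, the interference $I=\sum_{j\neq k}|\mathbf{h}_k^T\mathbf{p}_j|^2$, the DAC distortion $D=\rho_{DA}\mathbf{h}_k^T\textrm{diag}(\mathbf{P}\mathbf{P}^H)\mathbf{h}_k^*$, the ADC distortion $\mathbb{E}\{|n_{AD,k}|^2\}=\rho_{AD}(1-\rho_{AD})\mathbb{E}\{|y_k|^2\}$, and the thermal noise. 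Using $\mathbb{E}\{|y_k|^2\}=(1-\rho_{DA})(S+I)+D+\sigma_n^2$ and collecting the factors of $(1-\rho_{AD})$, the expression \eqref{SIQNR} collapses to the compact ratio $\gamma=\frac{(1-\rho_{AD})(1-\rho_{DA})S}{\rho_{AD}(1-\rho_{DA})S+(1-\rho_{DA})I+D+\sigma_n^2}$, after which it only remains to find deterministic equivalents of $S$, $I$, $D$ and of the power constant $c^2$ in \eqref{c}.

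For the random-matrix analysis I would introduce the leave-one-out resolvent $\mathbf{Q}_k\triangleq(\mathbf{H}^H\mathbf{H}-\mathbf{h}_k^*\mathbf{h}_k^T+\alpha\mathbf{I}_N)^{-1}$ and invoke three standard tools: the Sherman--Morrison identity relating $\mathbf{Q}$ to $\mathbf{Q}_k$, the trace lemma $\mathbf{h}_k^T\mathbf{A}\mathbf{h}_k^*-\textrm{Tr}(\mathbf{R}\mathbf{A})\xrightarrow{a.s.}0$ for $\mathbf{A}$ independent of $\mathbf{h}_k$ (recall $\mathbf{h}_k=\mathbf{R}^{1/2}\tilde{\mathbf{h}}_k$ with unit-variance entries), and the rank-one perturbation lemma $\textrm{Tr}(\mathbf{A}(\mathbf{Q}-\mathbf{Q}_k))\xrightarrow{a.s.}0$. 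I would then show that $\xi\triangleq\lim\textrm{Tr}(\mathbf{R}\mathbf{Q})=\lim\mathbf{h}_k^T\mathbf{Q}_k\mathbf{h}_k^*$ satisfies \eqref{xi}: expanding $N=\textrm{Tr}(\mathbf{R}\mathbf{Q}\mathbf{Q}^{-1})=\sum_j\mathbf{h}_j^T\mathbf{R}\mathbf{Q}\mathbf{h}_j^*+\alpha\,\textrm{Tr}(\mathbf{R}\mathbf{Q})$ and applying the three tools identifies the deterministic equivalent $\bar{\mathbf{Q}}=(\frac{M}{1+\xi}\mathbf{R}+\alpha\mathbf{I}_N)^{-1}$, for which $\textrm{Tr}(\mathbf{R}\bar{\mathbf{Q}})=\xi$ reproduces \eqref{xi} exactly, while $\textrm{Tr}(\mathbf{R}\bar{\mathbf{Q}}^2)=\frac{(1+\xi)^2}{N}E_{12}$ and $\textrm{Tr}(\mathbf{R}^2\bar{\mathbf{Q}}^2)=\frac{(1+\xi)^2}{N}E_{22}$ link $\bar{\mathbf{Q}}$ to the quantities $E_{ij}$ in \eqref{E}.

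Next I would evaluate each term. The signal follows at once as $S=c^2(\frac{\xi}{1+\xi})^2$. For the interference I would use the exact identity $\sum_{j\neq k}|\mathbf{h}_k^T\mathbf{Q}\mathbf{h}_j^*|^2=\mathbf{h}_k^T\mathbf{Q}\mathbf{h}_k^*-\alpha\mathbf{h}_k^T\mathbf{Q}^2\mathbf{h}_k^*-(\mathbf{h}_k^T\mathbf{Q}\mathbf{h}_k^*)^2$, which comes from $\sum_{j\neq k}\mathbf{h}_j^*\mathbf{h}_j^T=\mathbf{Q}^{-1}-\alpha\mathbf{I}_N-\mathbf{h}_k^*\mathbf{h}_k^T$, reducing $I$ to $\xi$ and the second-order quantity $\textrm{Tr}(\mathbf{R}\mathbf{Q}^2)$. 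For the power constant, $\textrm{Tr}\{\mathbf{H}\mathbf{Q}^2\mathbf{H}^H\}=\sum_j\mathbf{h}_j^T\mathbf{Q}^2\mathbf{h}_j^*\to\frac{M}{(1+\xi)^2}\textrm{Tr}(\mathbf{R}\mathbf{Q}^2)$ as well, whence $c^2=\frac{P(1-\beta E_{22})}{\beta E_{12}}$. I would obtain the needed $\textrm{Tr}(\mathbf{R}\mathbf{Q}^2)=-\frac{d}{d\alpha}\textrm{Tr}(\mathbf{R}\mathbf{Q})\to\frac{(1+\xi)^2E_{12}/N}{1-\beta E_{22}}$ by differentiating the fixed point \eqref{xi}. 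The DAC term is where the Toeplitz hypothesis enters: a Toeplitz $\mathbf{R}$ with $\textrm{Tr}(\mathbf{R})=N$ has unit diagonal, so $\mathbf{h}_k^T\textrm{diag}(\mathbf{P}\mathbf{P}^H)\mathbf{h}_k^*=\sum_n|h_{kn}|^2[\mathbf{P}\mathbf{P}^H]_{nn}\to\sum_n[\mathbf{P}\mathbf{P}^H]_{nn}=\textrm{Tr}(\mathbf{P}\mathbf{P}^H)=P$, giving $D\to\rho_{DA}P$.

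Finally I would substitute $S$, $I$, $c^2$, $D=\rho_{DA}P$ and $\sigma_n^2=P/\gamma_0$ into the compact ratio and simplify with the identities that follow from \eqref{xi}, chiefly $E_{11}=\frac{\xi}{1+\xi}$ and $\beta E_{22}=\frac{\xi}{1+\xi}-\rho(1+\xi)E_{12}$; these yield $\xi[E_{22}+\frac{\rho}{\beta}(1+\xi)^2E_{12}]=\frac{\xi^2}{\beta}(1-\beta E_{22})$ and $\xi(1-\beta E_{22})-\rho(1+\xi)^2E_{12}=\beta E_{22}$, so the common factors cancel and multiplying through by $\gamma_0$ delivers \eqref{asy_SIQNR_c}. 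I expect the main obstacle to be the second-order deterministic equivalent $\textrm{Tr}(\mathbf{R}\mathbf{Q}^2)$ that governs both the interference and $c^2$: making it rigorous requires justifying the interchange of the limit with the $\alpha$-derivative (equivalently, the iterated deterministic-equivalent estimate carrying the correction factor $1/(1-\beta E_{22})$), alongside the constant-diagonal step that lets the DAC quadratic form collapse to $P$.
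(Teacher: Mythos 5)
Your proposal is correct and follows essentially the same route as the paper's Appendix~C: a term-by-term asymptotic evaluation of \eqref{SIQNR} via leave-one-out resolvents, the fixed point \eqref{xi} with its derivative giving the $1/(1-\beta E_{22})$ correction for second-order quantities, the unit-diagonal/Toeplitz argument collapsing the DAC quadratic form to $\rho_{DA}P$, and the identity $\xi(1-\beta E_{22})=\rho(1+\xi)^2E_{12}+\beta E_{22}$ for the final simplification. The only cosmetic differences are that you derive the deterministic equivalents directly rather than citing them from prior work, and you reduce the interference via the exact identity $\sum_{j\neq k}\mathbf{h}_j^*\mathbf{h}_j^T=\mathbf{Q}^{-1}-\alpha\mathbf{I}_N-\mathbf{h}_k^*\mathbf{h}_k^T$ where the paper uses the leave-one-out quadratic form with $\mathbf{H}_k^H\mathbf{H}_k$ in the middle.
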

\begin{proof}
See Appendix~C.
\end{proof}

From \eqref{asy_SIQNR_c}, $\gamma$ is affected by distortion parameters $\rho_{DA}$ and $\rho_{AD}$ due to quantizations.
In particular, $\gamma$ decreases with both $\rho_{DA}$ and $\rho_{AD}$ because $\frac{\partial \gamma}{\partial \rho_{DA}}<0$ and $\frac{\partial \gamma}{\partial \rho_{AD}}<0$.
Since $\rho_{DA}$ and $\rho_{AD}$ are distortion factors decreasing with quantization resolution, we say that higher-resolution DAC or ADC (smaller $\rho_{DA}$ or $\rho_{AD}$) achieves larger $\gamma$, which corresponds to better performance.
The impact of spatial correlation lies in the values of $\xi$, $E_{12}$, and $E_{22}$, which depends on the eigenvalues of $\mathbf{R}$.

Note that the normalized regularization parameter, $\rho$, which is usually set as $\frac{\beta}{\gamma_0}$ for MIMO systems in existing literature, plays an important role in performance improvement with RZF.
Now, we are ready to optimize $\rho$ for rate maximization considering the impacts of channel correlation, low-resolution DACs, and finite-resolution ADCs.
The optimization problem follows
\begin{equation}
\begin{aligned}
\label{max}
\max\limits_{\rho}~~~\gamma.
\end{aligned}
\end{equation}
This optimization problem is equivalent to maximize the sum rate of all users, since the instantaneous SIQNR of each user converges to the same $\gamma$ in \eqref{asy_SIQNR_c}.
The closed-form solution to \eqref{max} is given in Lemma~\ref{lemma_rho}.

\begin{lemma}
\label{lemma_rho}
The optimal regularization parameter, $\rho^*$, is obtained as
\begin{equation}
\label{rho_opt}
\begin{aligned}
\rho^*=\frac{ \left( \rho_{DA} \gamma_0 +1 \right) \beta}{(1-\rho_{DA})\gamma_0}.
\end{aligned}
\end{equation}
\end{lemma}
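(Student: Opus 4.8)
The plan is to solve the scalar problem \eqref{max} by peeling off the factors that cannot move the maximizer and then collapsing the stationarity condition, with the help of \eqref{xi}, into a single transparent equation in $\rho$.

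First I would rewrite the SIQNR \eqref{asy_SIQNR_c} as $\gamma=\frac{(1-\rho_{AD})G}{\rho_{AD}G+C}$, grouping $G\triangleq(1-\rho_{DA})\gamma_0\,\xi\big[E_{22}+\tfrac{\rho}{\beta}(1+\xi)^2E_{12}\big]$ and $C\triangleq(\rho_{DA}\gamma_0+1)(1+\xi)^2E_{12}+(1-\rho_{DA})\gamma_0E_{22}$. Because $\gamma=\frac{1-\rho_{AD}}{\rho_{AD}+C/G}$ is strictly increasing in $G/C$ with $1-\rho_{AD}>0$, maximizing $\gamma$ is the same as maximizing $R\triangleq G/C$, and the optimizer is automatically independent of $\rho_{AD}$, which is the ADC-independence claimed in the text. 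Writing $a\triangleq(1-\rho_{DA})\gamma_0$ and $d\triangleq\rho_{DA}\gamma_0+1$ (so the target is $\rho^\ast=\beta d/a$), the objective becomes $R=\frac{a\xi[E_{22}+\frac{\rho}{\beta}(1+\xi)^2E_{12}]}{d(1+\xi)^2E_{12}+aE_{22}}$.

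The crux is that $\xi$ and the $E_{ij}$ in \eqref{E} all depend on $\rho$ implicitly through \eqref{xi}, so I would extract two facts from the fixed-point equation before differentiating. Differentiating \eqref{xi} in $\rho$, the cross terms cancel and leave $\frac{d\xi}{d\rho}=-\frac{(1+\xi)^2E_{12}}{1-\beta E_{22}}$. Rewriting $E_{11}=\mathbb{E}_\lambda\{\lambda/g\}$ with $g\triangleq\rho(1+\xi)+\beta\lambda$ as $E_{11}=\mathbb{E}_\lambda\{\lambda g/g^2\}$ and using $\xi=(1+\xi)E_{11}$ from \eqref{xi} gives the algebraic identity $\xi=\rho(1+\xi)^2E_{12}+\beta(1+\xi)E_{22}$. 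Substituting these two relations into $R$, the factor $1-\beta E_{22}$ cancels between numerator and denominator and $R$ collapses to the compact closed form $R=\frac{a\xi^2}{a\xi-(\beta d-a\rho)\,\xi'}$, with $\xi'\triangleq\frac{d\xi}{d\rho}$. This is the step I expect to carry the weight of the proof, since it is where the implicit second-moment quantities must be made to funnel through $\xi$ alone.

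From here the optimization is almost mechanical: with $w\triangleq1/\xi$ one has $\frac{1}{R}=w+\frac{\beta d-a\rho}{a}\,w'$, and a telescoping differentiation cancels all but one term, yielding $\frac{d}{d\rho}\big(\tfrac1R\big)=\frac{\beta d-a\rho}{a}\,w''$. Hence every stationary point of $R$ with $w''\neq0$ must satisfy $\beta d-a\rho=0$, i.e. $\rho^\ast=\frac{\beta d}{a}=\frac{(\rho_{DA}\gamma_0+1)\beta}{(1-\rho_{DA})\gamma_0}$, which is \eqref{rho_opt}; note that this value does not involve the eigenvalues of $\mathbf{R}$, giving the claimed independence from the channel correlation. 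To confirm it is the global maximizer I would show that $w=1/\xi$ is strictly concave in $\rho$ (i.e. $w''<0$), which makes $\frac{d}{d\rho}(1/R)$ negative for $\rho<\rho^\ast$ and positive for $\rho>\rho^\ast$, so $1/R$ is minimized and $\gamma$ maximized exactly at $\rho^\ast$. The main obstacles are therefore the algebraic collapse in the previous paragraph and the concavity of $1/\xi$, which can be checked directly from \eqref{xi} and is readily verified in the i.i.d. special case.
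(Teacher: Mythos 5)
Your proof is correct and lands on \eqref{rho_opt} by the same skeleton as the paper --- first-order stationarity of the ratio $G/C$ in $\rho$, powered by exactly the two consequences of the fixed-point equation \eqref{xi} that the paper records as \eqref{xi_d} and \eqref{xi0} --- but your organization of the algebra is genuinely different and buys more. The paper differentiates \eqref{asy_SIQNR_c} directly, obtains the stationarity condition \eqref{gamma_d} with the derivatives of $(1+\xi)^2E_{12}$ and $E_{22}$ still present, simplifies $\partial\tilde{E}/\partial\rho$ via \eqref{A_d}, and buries the final cancellation in ``some basic manipulations.'' You instead eliminate $E_{12}$ and $E_{22}$ \emph{before} differentiating: setting $u=1-\beta E_{22}$, the two identities give $(1+\xi)^2E_{12}=-\xi'u$ and $u=(1+\xi+\rho\xi')^{-1}$, the factor $u$ cancels between numerator and denominator, and the objective collapses to $R=a\xi^2/[a\xi-(\beta d-a\rho)\xi']$, whence $(1/R)'=\frac{\beta d-a\rho}{a}\,w''$ with $w=1/\xi$ --- I checked this telescoping and it is exact. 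This factored form makes the root $\rho^*=\beta d/a$, and its independence of $\rho_{AD}$ and of $\mathbf{R}$, immediate, and unlike the paper's proof it also supplies a second-order argument for \emph{global} optimality. The one loose end is that the required concavity $w''<0$ is only verified for the uncorrelated case; for general $\mathbf{R}$ it still needs a proof (e.g., by showing from \eqref{xi} that $\rho$ is a convex increasing function of $w$). Since the paper asserts global optimality of $\rho^*$ with no second-order check at all, your argument is nonetheless the more complete of the two.
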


\begin{proof}
In order to solve the problem in \eqref{max}, we use \eqref{asy_SIQNR_c} and set $\frac{\partial \gamma}{\partial \rho}=0$, which yields
\begin{equation}
\begin{aligned}
\label{gamma_d}
&\tilde{E}\left[\!\left(\! \rho_{DA} \gamma_0\!+\!1 \!\right)  \frac{\partial }{\partial \rho} [(\!1\!+\!\xi\!)^2E_{12}]\!+\!(1\!-\!\rho_{DA})\gamma_0 \frac{\partial E_{22}}{\partial \rho} \!\right]
\!-\!\frac{\partial \tilde{E}}{\partial \rho}\left[\left(\rho_{DA}\gamma_0\!+\!1\right)(1+\xi)^2E_{12}\!+\!(1\!-\!\rho_{DA})E_{22}\gamma_0\right]
\!=\!0,
\end{aligned}
\end{equation}
where we define $\tilde{E}=\xi\left[E_{22}+\frac{\rho}{\beta}(1+\xi)^2E_{12}\right]$ for notational brevity and it follows
\begin{align}
\label{A_d}
\frac{\partial \tilde{E}}{\partial \rho} =&  \frac{\partial \xi}{\partial \rho}\left[E_{22}+\frac{\rho}{\beta}(1+\xi)^2E_{12}\right]
 +\xi \left[ \frac{\partial E_{22}}{\partial \rho} +  \frac{(1+\xi)^2E_{12}}{\beta}   + \frac{\rho}{\beta} \frac{\partial }{\partial \rho} \left[(1+\xi)^2E_{12} \right]\right]
\nonumber
 \\
= & \xi \frac{\partial E_{22}}{\partial \rho} + \frac{\xi \rho}{\beta} \frac{\partial }{\partial \rho} \left[(1+\xi)^2E_{12}\right],
\end{align}
where \eqref{xi_d} and \eqref{xi0} in Appendix~C are utilized.
Then, by substituting \eqref{A_d} into \eqref{gamma_d} and after some basic manipulations, Lemma~\ref{lemma_rho} is obtained.
\end{proof}

Comparing \eqref{rho_opt} with the conventional typical value of $\rho$, we can rewrite $\rho^*=\frac{\beta}{\tilde{\gamma}}$,
where $\tilde{\gamma} \triangleq \frac{(1-\rho_{DA})\gamma_0}{\rho_{DA}\gamma_0+1}$ can be regarded as an equivalent system SNR affected by low-resolution DACs.
Analogous to the typical $\rho=\frac{\beta}{\gamma_0}$, the derived $\rho^*$ replaces the original SNR $
\gamma_0$ with the equivalent one, i.e., $\tilde{\gamma}$.

\textbf{Remark 1.}
It is interesting that $\rho^*$ in \eqref{rho_opt} is independent of correlation matrix $\mathbf{R}$, which implies that \emph{the optimal RZF precoding is not affected by the spatial correlation at the BS.}
Besides, $\rho^*$ is independent of $\rho_{AD}$ which implies that \emph{the resolution of receiving ADCs does not affect the optimal RZF precoding.} This looks promising since it allows the BS to determine the optimal RZF parameter without knowing the ADC setups at user sides.
From \eqref{rho_opt}, $\rho^*$ increases with $\rho_{DA}$ while it decreases with $b_{DA}$.

\vspace{0.2cm}
\subsection{Optimal User Loading Ratio}

In general, serving more users directly increases instantaneous sum rate while leading to heavier pilot overhead for channel estimation.
It is interesting to optimize the user loading ratio, $\beta$, by characterizing the rate performance within a period no longer than coherence interval.

Substituting \eqref{rho_opt} into \eqref{asy_SIQNR_c}, the maximized SIQNR of each user can be expressed as
\begin{equation}
\label{SIQNR_opt}
\gamma^*=\frac{(1-\rho_{AD})\xi^*}{1+\rho_{AD} \xi^*},
\end{equation}
where $\xi^*$ is computed from \eqref{xi} by substituting $\rho^*$ in \eqref{rho_opt}.
On one hand, the effect of ADC quantization is evident. The more quantization bits ADCs provide, the smaller $\rho_{AD}$ is and the larger $\gamma^*$ will be. One the other hand, the impact of low-resolution DAC lies in $\rho^*$ as indicated in \eqref{rho_opt}.
Besides, the spatial correlation affects the asymptotic performance through~$\xi^*$.

By applying the assumption of the worst-case Gaussian interference, the asymptotic achievable rate of each user equals
\begin{equation}
\label{rate}
R=\log_2\left(1+\gamma^*\right)=\log_2\frac{1+\xi^*}{1+\rho_{AD} \xi^*}.
\end{equation}
$R$ in (20) represents the achievable rate of each user per channel use without considering the pilot overhead.
Assume that each user needs an overhead of $\tau$ to transmit \emph{precoded pilot} signal for user detection within a coherence interval $T$. The achievable sum rate per antenna including the pilot overhead can then be evaluated from \eqref{rate} as
\begin{equation}
\begin{aligned}
\label{rate_avg}
\overline{R}=\beta \left(1-\frac{M\tau}{T}\right)R
=\beta (1-\eta\beta)\log_2 \frac{1+\xi^*}{1+\rho_{AD} \xi^*},
\end{aligned}
\end{equation}
where $\eta \triangleq \frac{N\tau}{T}$.
Then, we can acquire the optimal user loading ratio, $\beta^*$, by maximizing $\overline{R}$, which yields
\begin{equation}
\begin{aligned}
\label{max_R}
\beta^*= \arg~\max\limits_{\beta}~\overline{R}.
\end{aligned}
\end{equation}
Since $\overline{R}$ in \eqref{rate_avg} is a continuous function w.r.t $\beta\in(0,1)$, $\beta^*$ is obtained when the derivative of $\overline{R}$ w.r.t. $\beta$ equals zero, or tends to the critical values that $\beta^*\rightarrow 0$ or $\beta^*\rightarrow 1$.
Consider the scenario of low SNR with $\gamma_0\ll 1$.
Substituting $\beta\rightarrow 0$ to \eqref{rate_avg}, it is obvious that $\overline{R}\rightarrow 0$.
For $\beta\rightarrow 1$ and $\gamma_0\ll 1$, we have $\rho^*\rightarrow\frac{ 1}{(1-\rho_{DA})\gamma_0}$ by using \eqref{rho_opt}.
Substituting this into \eqref{xi} yields $\xi^*\rightarrow\mathbb{E}_{\lambda} \left \{ \lambda(1-\rho_{DA})\gamma_0 \right \}\ll 1$, which implies $\overline{R}\rightarrow 0$.
As a result, we can conclude that the maximum of $\overline{R}$ can only be obtained by solving the equation
\begin{equation}
\begin{aligned}
\label{rate_d}
\frac{\partial \overline{R} }{\partial \beta}=0.
\end{aligned}
\end{equation}
Although it is hard to find a simple closed-form solution of \eqref{rate_d} in general cases, we can resort to offline numerical methods, e.g., the bisection method, as in Section~\uppercase\expandafter{\romannumeral5}.
Since $\beta^*$ in \eqref{max_R} depends only on system parameters, lookup tables of $\beta^*$ can be constructed.

Note that in maximizing $\overline{R}$, we obtain $\rho^*$ for any value of $\beta$.
Then, we optimize $\beta$ by substituting the obtained $\rho^*$ in \eqref{rho_opt}, which can be considered as a function w.r.t $\beta$.
Therefore, it is safe to conclude that both the achieved $\rho^*$ and $\beta^*$ are globally optimal.


\vspace{0.2cm}

\section{Performance under Uncorrelated Channels}

In this section, we investigate a special case of uncorrelated channels which allows a closed-form solution to the optimal user loading ratio at low SNR.

\subsection{Asymptotic Achievable Rate}
Under the special case of uncorrelated channels, i.e., $\mathbf{R}=\mathbf{I}$ in \eqref{H}, the value of $\xi$ in \eqref{xi} can be further simplified. It reduces to the unique positive solution of the following function
\begin{align}
\label{g}
\xi=\frac{(1+\xi)}{\rho(1+\xi)+\beta}.
\end{align}
Note that the solution to $\xi$ is a function of $\beta$ and $\rho$.
By solving the quadratic equality in \eqref{g}, the positive one of its solutions, referred to as $\xi_{UC}$, equals
\begin{equation}
\label{g_d}
\xi_{UC}(\beta,\rho)\!=\!\frac{1}{2}\left[\sqrt{\frac{(1\!-\!\beta)^2}{\rho^2}\!+\!\frac{2(1\!+\!\beta)}{\rho}+1}\!+\!\frac{1\!-\!\beta}{\rho}\!-\!1\!\right].
\end{equation}
From the definition of $E_{ij}$ in \eqref{E}, we have $E_{12}=E_{22}$ with $\lambda\equiv1$ for an uncorrelated massive MIMO channel. Thus, the asymptotic SIQNR per user in \eqref{asy_SIQNR_c} can be simplified as
\begin{equation}
\begin{aligned}
\label{asy_SIQNR_uc}
\gamma_{UC} = \frac{(1-\rho_{AD})(1-\rho_{DA})\xi_{UC}\left[1+\frac{\rho}{\beta}(1+\xi_{UC})^2\right]\gamma_0}
{\rho_{AD} (1-\rho_{DA}) \xi_{UC}\left[1+\frac{\rho}{\beta}(1+\xi_{UC})^2\right]\gamma_0+\rho_{DA}(1+\xi_{UC})^2\gamma_0+(1-\rho_{DA})\gamma_0+(1+\xi_{UC})^2}.
\end{aligned}
\end{equation}

From the conclusions in Remark 1, it indicates that the optimal regularization parameter is independent of the spatial correlation.
Thus, $\rho^*$ remains the same as expressed in \eqref{rho_opt}, which maximizes $\gamma_{UC}$ in \eqref{asy_SIQNR_uc}.
One typical scenario for massive MIMO is widely acknowledged as the low SNR case. We focus on this special case with $\gamma_0 \ll 1$.
Substituting $\rho^*$ in \eqref{rho_opt} into $\xi_{UC}(\beta,\rho)$ in \eqref{g_d}, it yields
\begin{align}
\nonumber
\xi_{UC}^*(\beta,\rho^*)&=
\frac{1}{2} \sqrt{\frac{(1-\beta)^2}{\left[\frac{1}{(1-\rho_{DA})\gamma_0}\!+\!\frac{\rho_{DA}}{1-\rho_{DA}}\right]^2\beta^2}\!+\!\frac{2(1+\beta)}{\left[\frac{1}{(1-\rho_{DA})\gamma_0}\!+\!\frac{\rho_{DA}}{1-\rho_{DA}}\right]\beta}\!+\!1}
 +\frac{1-\beta}{2\left[\frac{1}{(1-\rho_{DA})\gamma_0}\!+\!\frac{\rho_{DA}}{1-\rho_{DA}}\right]\beta}-\frac{1}{2}
\nonumber
\\
&\approx (1-\rho_{DA})\frac{\gamma_0}{\beta}+\textrm{o}(\gamma_0)
\label{g_low}
,
\end{align}
where \eqref{g_low} applies the Taylor's expansion for small $\gamma_0$,
i.e., $\xi_{UC}^*=\xi_{UC}^*\Big |_{\gamma_0=0}+\frac{\partial \xi_{UC}^* }{\partial \gamma_0} \Big |_{\gamma_0=0} \gamma_0 + \textrm{o}(\gamma_0)$,
and uses $\xi_{UC}^*\Big |_{\gamma_0=0}\approx0$ and $\frac{\partial \xi_{UC}^* }{\partial \gamma_0} \Big |_{\gamma_0=0}\approx\frac{1-\rho_{DA}}{\beta}$ for small $\gamma_0$.
$\textrm{o}(\gamma_0)$ is an ignorable higher-order term w.r.t $\gamma_0$.
Replacing $\xi^*$ in \eqref{rate} with $\xi_{UC}^*$ and substituting \eqref{g_low}, the achievable rate of each user at low SNR can be rewritten as:
\begin{align}
R_{UC}
&=\log_2\frac{1+\xi_{UC}^*}{ 1+\rho_{AD} \xi_{UC}^*}
=\log_2\frac{\beta+(1-\rho_{DA})\gamma_0}{\beta+\rho_{AD} (1-\rho_{DA})\gamma_0}.
\label{rate_low}
\end{align}
We observe that a smaller value of $\rho_{AD}$, $\rho_{DA}$ or $\beta$ provides higher $R_{UC}$. It implies that high-resolution ADCs, DACs, or a small user loading ratio provides high achievable rate for each user.

Then, we characterize the rate loss caused by the finite-resolution ADCs for a fixed quantization resolution of DACs.

\begin{lemma}
\label{lemma_rareloss}
Under the uncorrelated massive MIMO channel, the normalized rate loss per energy caused by finite-resolution ADCs with fixed-bit DACs can be characterized at low SNR as
\begin{align}
\label{delta_rate_low}
\frac{\Delta R_{UC}}{\gamma_0}
\rightarrow \frac{1}{\ln 2}\rho_{AD}(1-\rho_{DA}) \frac{1}{ \beta}.
\end{align}
\end{lemma}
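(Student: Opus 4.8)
The plan is to read $\Delta R_{UC}$ as the rate gap between an idealized receiver with infinite-resolution ADCs and the actual receiver with finite-resolution ADCs, holding the DAC resolution fixed. Both rates are already available in closed form from the low-SNR expression \eqref{rate_low}, so the whole argument is a short computation followed by a first-order expansion in $\gamma_0$. First I would obtain the reference rate by setting $\rho_{AD}=0$ in \eqref{rate_low}, which gives
\begin{equation}
R_{UC}^{\mathrm{ideal}}=\log_2\frac{\beta+(1-\rho_{DA})\gamma_0}{\beta}.
\end{equation}
This is the correct baseline because it removes only the ADC distortion while keeping the same fixed-bit DACs (the factor $1-\rho_{DA}$ is untouched).

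Next I would form the difference $\Delta R_{UC}=R_{UC}^{\mathrm{ideal}}-R_{UC}$ using \eqref{rate_low} for $R_{UC}$. The key simplification is that the two logarithms share the common numerator $\beta+(1-\rho_{DA})\gamma_0$, so it cancels when the logs are combined, leaving
\begin{equation}
\Delta R_{UC}=\log_2\frac{\beta+\rho_{AD}(1-\rho_{DA})\gamma_0}{\beta}
=\log_2\!\left(1+\frac{\rho_{AD}(1-\rho_{DA})\gamma_0}{\beta}\right).
\end{equation}
This intermediate identity is the crux: the DAC-dependent terms drop out of the ratio and the residual dependence on ADC resolution is captured entirely by the single factor $\rho_{AD}(1-\rho_{DA})\gamma_0/\beta$.

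Finally I would invoke the low-SNR regime $\gamma_0\ll 1$, under which the argument $x=\rho_{AD}(1-\rho_{DA})\gamma_0/\beta$ tends to zero, and apply the expansion $\log_2(1+x)=\frac{x}{\ln 2}+\mathrm{o}(x)$. Substituting and dividing through by $\gamma_0$ yields
\begin{equation}
\frac{\Delta R_{UC}}{\gamma_0}=\frac{1}{\ln 2}\,\frac{\rho_{AD}(1-\rho_{DA})}{\beta}+\frac{\mathrm{o}(\gamma_0)}{\gamma_0},
\end{equation}
and letting $\gamma_0\to 0$ removes the residual term, giving exactly \eqref{delta_rate_low}. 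The computation is routine; the only point requiring care is the identification of the reference rate (ideal ADCs, unchanged DACs) and the observation that, since \eqref{rate_low} is itself a first-order-in-$\gamma_0$ expression, I am consistently working to leading order throughout, so the higher-order remainder is genuinely negligible after normalizing by $\gamma_0$. I do not anticipate a substantive obstacle beyond stating this consistency cleanly.
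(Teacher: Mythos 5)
Your proposal is correct and follows essentially the same route as the paper: both take the $\rho_{AD}=0$ rate from \eqref{rate_low} as the benchmark, exploit the cancellation of the common numerator $\beta+(1-\rho_{DA})\gamma_0$ to reduce the gap to $\log_2\bigl(1+\rho_{AD}(1-\rho_{DA})\gamma_0/\beta\bigr)$, and then pass to the low-SNR limit (the paper phrases this via $(1+x)^{1/x}\rightarrow \mathrm{e}$ rather than $\ln(1+x)=x+\mathrm{o}(x)$, which is equivalent). No gaps.
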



\begin{proof}
Given the ideal infinite-resolution ADCs, the benchmark asymptotic rate can be evaluated from \eqref{rate_low} by letting $\rho_{AD}=0$. It gives
\begin{equation}
\label{rate_0}
\tilde{R}_{UC}=\log_2\frac{\beta+(1-\rho_{DA})\gamma_0}{\beta}.
\end{equation}
Then, the normalized rate loss per energy can be characterized by subtracting \eqref{rate_low} from \eqref{rate_0} as follows
\begin{align}
\frac{\Delta R_{UC}}{\gamma_0}&= \frac{\tilde{R}_{UC}}{\gamma_0}-\frac{R_{UC}}{\gamma_0}
=\log_2 \left[1+\rho_{AD}(1-\rho_{DA}) \frac{\gamma_0}{\beta}\right]^{\frac{1}{\gamma_0}}
\rightarrow \frac{1}{\ln 2}\rho_{AD}(1-\rho_{DA}) \frac{1}{ \beta}
,
\label{Delta_R}
\end{align}
where we use the fact that $(1+x)^{\frac{1}{x}}\rightarrow \textrm{e}$ for $x\rightarrow 0$.
\end{proof}

\textbf{Remark 2.}
Given $\rho_{DA}$, $\frac{\Delta R_{UC}}{\gamma_0}$ is approximately linearly affected by $\rho_{AD}$ and $\frac{1}{\beta}$ at low SNR.
Particularly, for $\beta=\frac{1}{2}$ as set in the following simulations, we have $\frac{\Delta R_{UC}}{\gamma_0}\rightarrow 0.06345$ with $b_{DA}=1$ and $b_{AD}=3$ according to Table~\ref{table_rho}.
This implies that, given 1-bit DACs at the BS, the normalized rate loss in \eqref{delta_rate_low} caused by 3-bit ADCs at the user side is relatively negligible.
\subsection{Optimal User Loading Ratio for Uncorrelated Channels}

Under the assumption of low SNR, we are able to obtain the optimal user loading ratio in closed form by maximizing the sum rate per antenna, given in Lemma~\ref{lemma_beta}.

\begin{lemma}
\label{lemma_beta}
For an uncorrelated massive MIMO system, the problem in \eqref{max_R} has a closed-form solution for asymptotically low SNR as
\begin{equation}
\begin{aligned}
\label{beta_opt}
\beta^*\!=\!
&-\gamma_0(1+\rho_{AD})(1-\rho_{DA})
+\!\sqrt{\gamma_0^2(1\!+\!\rho_{AD})^2(1\!-\!\rho_{DA})^2\!+\!\frac{\gamma_0(1\!+\!\rho_{AD})(1\!-\!\rho_{DA})}{\eta}}.
\end{aligned}
\end{equation}
\end{lemma}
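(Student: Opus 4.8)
The plan is to treat $\overline{R}$ in \eqref{rate_avg}, specialized to the uncorrelated low-SNR rate $R_{UC}$ of \eqref{rate_low}, as a one-dimensional function of $\beta$ and to locate its interior maximizer via the first-order condition \eqref{rate_d}. Writing $a\triangleq(1-\rho_{DA})\gamma_0$ and $b\triangleq\rho_{AD}(1-\rho_{DA})\gamma_0$ for brevity, the objective becomes $\overline{R}=\frac{1}{\ln 2}\beta(1-\eta\beta)\ln\frac{\beta+a}{\beta+b}$. The boundary analysis preceding \eqref{rate_d} already guarantees $\overline{R}\to 0$ as $\beta\to 0$ and as $\beta\to 1$, so a maximizer lies strictly inside $(0,1)$ and is a stationary point; it therefore suffices to solve $\frac{\partial\overline{R}}{\partial\beta}=0$.

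First I would differentiate $\overline{R}$ exactly. Using $\frac{\partial}{\partial\beta}[\beta(1-\eta\beta)]=1-2\eta\beta$ and $\frac{\partial}{\partial\beta}\ln\frac{\beta+a}{\beta+b}=\frac{b-a}{(\beta+a)(\beta+b)}$, the stationarity condition reads
\[
(1-2\eta\beta)\ln\frac{\beta+a}{\beta+b}+\beta(1-\eta\beta)\frac{b-a}{(\beta+a)(\beta+b)}=0.
\]
The main work, and the only delicate point, is to reduce this transcendental equation to something solvable in closed form, and this is exactly where the low-SNR regime enters. Since $\gamma_0\ll 1$ and (as confirmed a posteriori) the optimizer scales like $\beta\sim\sqrt{\gamma_0}$, both $a/\beta$ and $b/\beta$ are small. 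I would therefore linearize the logarithm, $\ln\frac{\beta+a}{\beta+b}\approx\frac{a-b}{\beta}$, and simplify the rational term via $(\beta+a)(\beta+b)\approx\beta(\beta+a+b)$, both of which discard only higher-order corrections in $\gamma_0$. Substituting these and cancelling the common nonzero factor $a-b=(1-\rho_{AD})(1-\rho_{DA})\gamma_0$ collapses the condition to $\frac{1-2\eta\beta}{\beta}=\frac{1-\eta\beta}{\beta+a+b}$, i.e., after cross-multiplying and rearranging, the quadratic
\[
\eta\beta^2+2\eta(a+b)\beta-(a+b)=0,\qquad a+b=(1+\rho_{AD})(1-\rho_{DA})\gamma_0.
\]

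Finally I would solve this quadratic by the quadratic formula and retain the positive root, which reproduces \eqref{beta_opt} directly; the negative root is discarded since $\beta>0$, and a quick check confirms the retained root lies in $(0,1)$, so together with the vanishing boundary values it is the global maximum rather than a minimum. I expect the main obstacle to be the order of operations in the asymptotic reduction: the leading $O(\sqrt{\gamma_0})$ contributions of the two terms in $\frac{\partial\overline{R}}{\partial\beta}$ cancel identically, so the maximizer is pinned down by the next-order balance. Consequently one must differentiate first and only then linearize, since inserting the linearized rate $R_{UC}\approx\frac{1}{\ln 2}\frac{a-b}{\beta}$ directly into $\overline{R}$ would make it monotone in $\beta$ and erase the very term responsible for the interior optimum.
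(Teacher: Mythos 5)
Your proposal is correct and follows essentially the same route as the paper: differentiate $\overline{R}=\beta(1-\eta\beta)R_{UC}$, set the derivative to zero, linearize the logarithm in the low-SNR regime, drop the $O(\gamma_0^2)$ term, and solve the resulting quadratic $\eta\beta^2+2\eta(1+\rho_{AD})(1-\rho_{DA})\gamma_0\beta-(1+\rho_{AD})(1-\rho_{DA})\gamma_0=0$ for its positive root. The only cosmetic difference is that you work directly with $\ln\frac{\beta+a}{\beta+b}$ from \eqref{rate_low} while the paper parametrizes through $\xi_{UC}^*\approx(1-\rho_{DA})\gamma_0/\beta$ and drops the $\rho_{AD}(\xi_{UC}^*)^2$ term, which is exactly your discarded $ab$ term, so the two reductions coincide.
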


\begin{proof}
Utilizing \eqref{rate_avg}, we first rewrite the sum rate per antenna without spatial correlation as
\begin{equation}
\begin{aligned}
\label{rate_avg_uc}
\overline{R}_{UC}
=\beta (1-\eta\beta)\log_2 \frac{1+\xi_{UC}^*}{1+\rho_{AD} \xi_{UC}^*},
\end{aligned}
\end{equation}
where $\xi_{UC}^*$ is expressed in \eqref{g_low}.
It directly yields that
\begin{align}
\frac{\partial \overline{R}_{UC} }{\partial \beta}
=&(1-2\eta\beta)\log_2 \frac{1+\xi_{UC}^*}{1+\rho_{AD} \xi_{UC}^*}
+\frac{\beta (1-\eta\beta)(1-\rho_{AD})}{\ln2(\xi_{UC}^*+1)(\rho_{AD} \xi_{UC}^*+1)}\frac{\partial \xi_{UC}^* }{\partial \beta}
\nonumber
\\
\approx &(1-2\eta\beta) \log_2 \frac{1+\xi_{UC}^*}{1+\rho_{AD} \xi_{UC}^*}
-\frac{\xi_{UC}^* (1-\eta\beta)(1-\rho_{AD})}{\ln2(\xi_{UC}^*\!+\!1)(\rho_{AD} \xi_{UC}^*+1)},
\label{r_b}
\end{align}
where \eqref{r_b} comes from $\frac{\partial \xi_{UC}^* }{\partial \beta}\approx-\frac{\xi_{UC}^*}{\beta}$ according to \eqref{g_low}.
Letting $\frac{\partial \overline{R}_{UC} }{\partial \beta}=0$, we get
\begin{equation}
\begin{aligned}
\label{r_b_0}
\rho_{AD}(\xi_{UC}^*)^2+ (1+\rho_{AD})\xi_{UC}^*-\frac{\eta\beta}{1-2\eta\beta}=0,
\end{aligned}
\end{equation}
where we use the equality $\ln(1+x)=x+\textrm{o}(x)$ for small $x$.
Substituting \eqref{g_low} into \eqref{r_b_0} and dropping $\rho_{AD}(\xi_{UC}^*)^2$ since $\xi_{UC}^*\ll1$, we arrive at a quadratic equation w.r.t. $\beta$, i.e, $\eta\beta^2\!+\!2\eta\gamma_0(1\!+\!\rho_{AD})(1\!-\!\rho_{DA})\beta\!-\!(1\!+\!\rho_{AD})(1\!-\!\rho_{DA})\gamma_0\!=\!0$, which has only one positive solution given in~\eqref{beta_opt}.

\end{proof}

\textbf{Remark 3.}
On one hand, $\beta^*$ is an increasing function w.r.t $\gamma_0$.
More users should be loaded at higher SNR in order to achieve the maximum sum rate per antenna, while it increases the total pilot overhead.
On the other hand, $\beta^*$ increases with $\rho_{AD}$ while it decreases with $\rho_{DA}$.
It implies that \emph{the optimal user loading ratio increases as the ADC resolution decreases at the users or the DAC resolution increases at the BS.}

\section{Simulation Results}

In this section, we evaluate the asymptotic SIQNR, $\gamma$, the asymptotic achievable rate of each user, $R$, and the asymptotic sum rate per antenna, $\overline{R}$, via simulations.
Optimal regularization parameter $\rho^*$ is verified by numerical methods.
Moreover, the closed-form solution of $\beta^*$ in \eqref{beta_opt} is also tested at low SNRs.
Consider a well-known exponential Toeplitz correlation model as follows
\begin{align}
\label{r}
r_{ij}=\nu^{|i-j|},
\end{align}
where $\nu$ denotes the correlation coefficient and $r_{ij}~(i,j=1,2,...,N)$ represents the entry of $\mathbf{R}$ at the $i$th row and $j$th column.
We use $\nu=0.2-0.8$ for simulation in the following.
Generally, small correlation coefficient is representative in a rich scattering environment, e.g., $\nu=0.2$ for indoor scenarios.
While large values of, e.g., $\nu=0.8$ represents the scenario where the BS locates in an unobstructed environment.

\begin{figure}[tb]
\centering\includegraphics[width=0.6\textwidth]{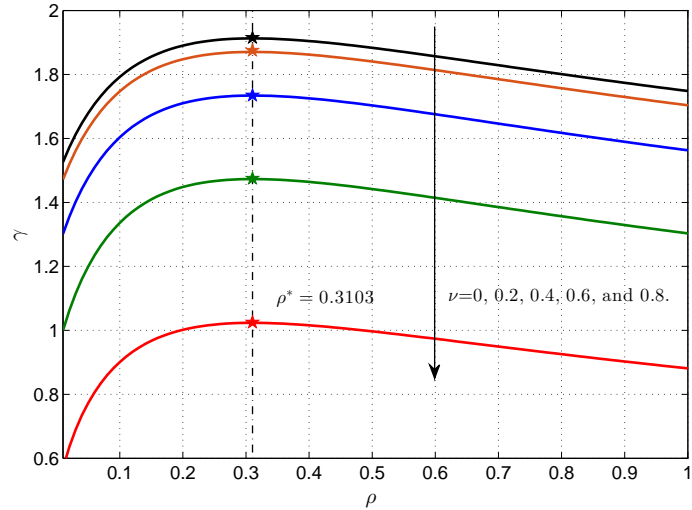}
\caption{SIQNR versus regularization parameter with various correlation coefficients ($N=64$, $M=32$, $\gamma_0=15$ dB, $b_{DA}=1$, and $b_{AD}=3$).}
\label{SIQNR_nu}
\end{figure}
\begin{figure}[tb]
\centering\includegraphics[width=0.6\textwidth]{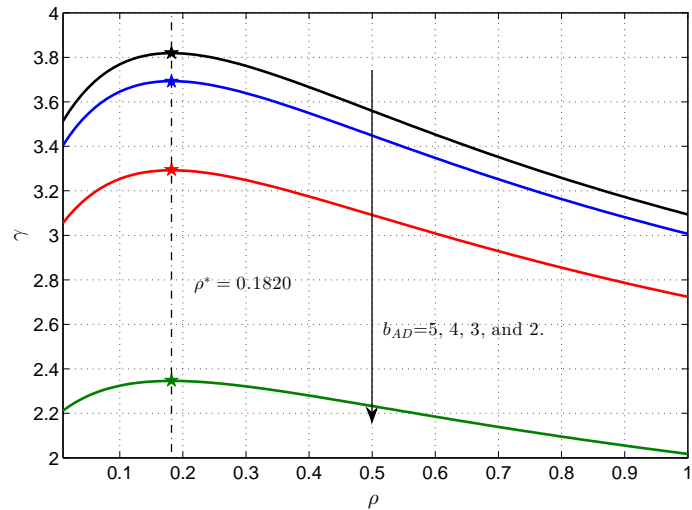}
\caption{SIQNR versus regularization parameter with various ADC resolutions ($N=64$, $M=16$, $\gamma_0=10$ dB, $b_{DA}=1$, and $\nu=0.5$).}
\label{SIQNR_AD}
\end{figure}

\begin{figure}[tb]
\centering\includegraphics[width=0.6\textwidth]{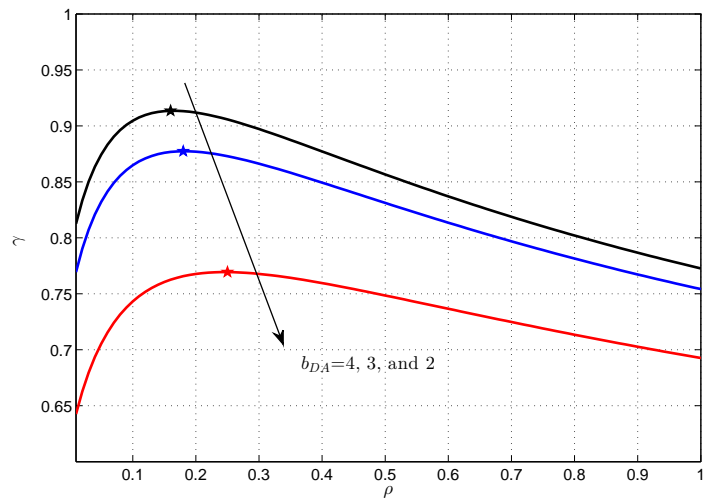}
\caption{SIQNR versus regularization parameter with various DAC resolutions ($N=32$, $M=16$, $\gamma_0=5$ dB, $b_{AD}=1$, and $\nu=0.5$).}
\label{SIQNR_DA}
\end{figure}

\begin{figure}[tb]
\centering\includegraphics[width=0.6\textwidth]{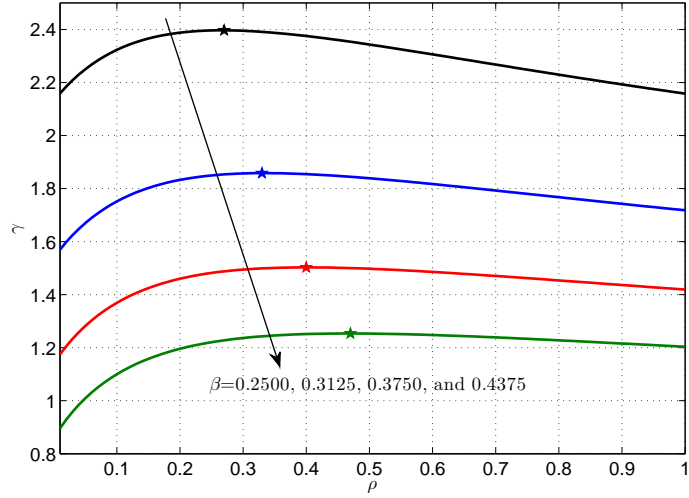}
\caption{SIQNR versus regularization parameter with various user loading ratio ($\gamma_0=5$ dB, $b_{DA}=1$, $b_{AD}=3$, $\nu=0.5$, $N=64$, and $M=16,20,24,28$).}
\label{SIQNR_beta}
\end{figure}

\begin{figure}[tb]
\centering\includegraphics[width=0.6\textwidth]{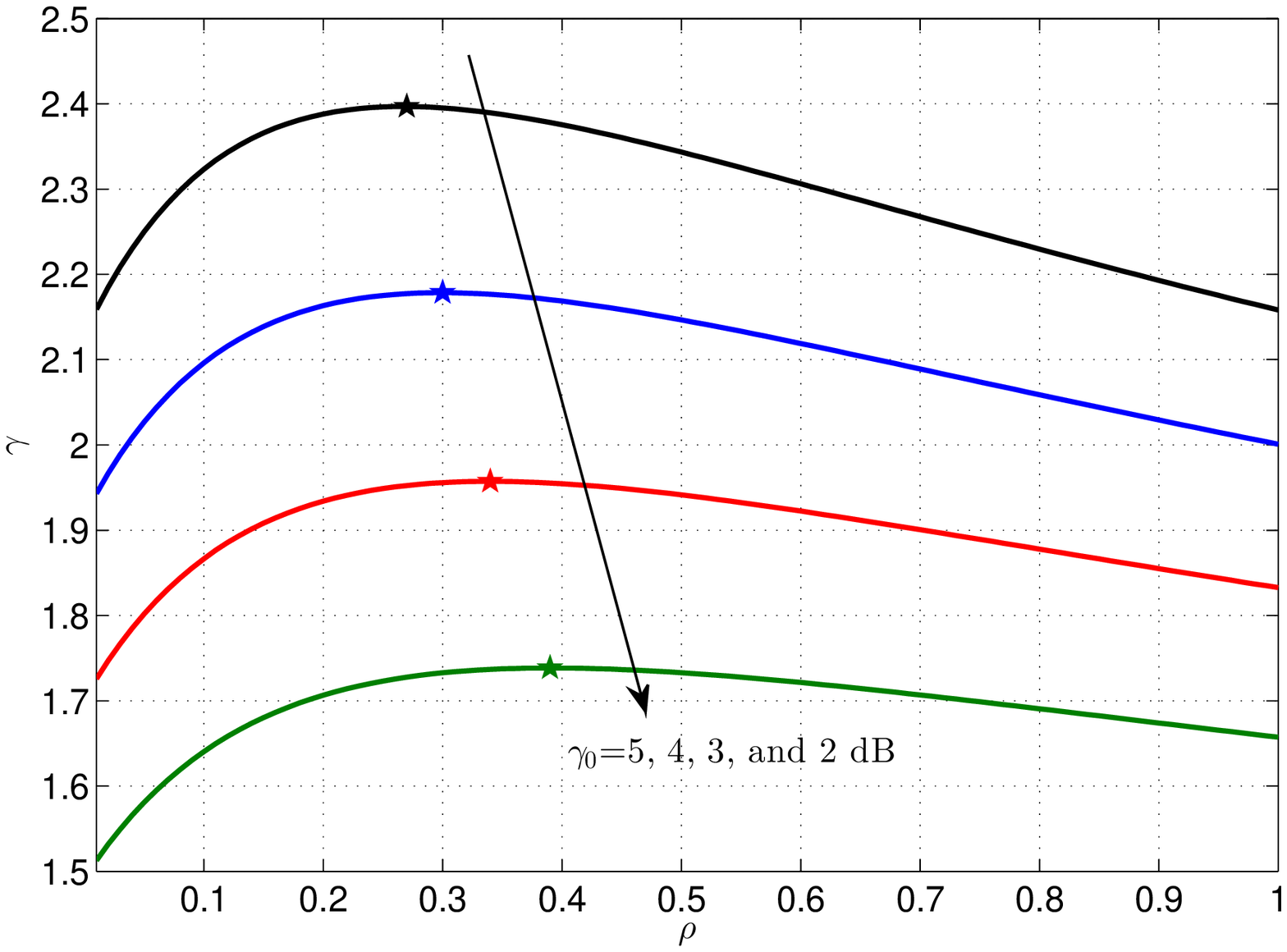}
\caption{SIQNR versus regularization parameter with various SNRs ($N=64$, $M=16$, $b_{DA}=1$, $b_{AD}=3$, $\nu=0.5$, and $\gamma_0=5, 4, 3, 2$ dB).}
\label{SIQNR_SNR}
\end{figure}

We calculate $\gamma$ in \eqref{asy_SIQNR_c} and test our derived $\rho^*$ in \eqref{rho_opt} using the values of $\xi$, $E_{12}$, and $E_{22}$ calculated in Appendix~D.
Fig.~\ref{SIQNR_nu} displays the asymptotic SIQNR with various correlation coefficients.
Obviously, $\gamma$ first increases but then decreases with increasing $\rho$.
The pentagrams mark the maximum $\gamma$ with $\rho^*$, which verifies $\rho^*=0.3103$ obtained from \eqref{rho_opt}.
It can be observed that $\rho^*$ is independent of $\nu$ as analyzed in Remark 1, which implies that the optimal RZF precoding design is independent of the antenna spatial correlation.
Fig. \ref{SIQNR_AD} shows the asymptotic SIQNR with various ADC resolutions, i.e., $b_{AD}=2-5$ bits.
$\rho^*$ is marked by pentagrams and is independent of the ADC resolution as indicated by \eqref{rho_opt}.
Fig. \ref{SIQNR_DA} displays the asymptotic SIQNR with varying DAC resolutions.
According to \eqref{rho_opt}, $\rho^*$ marked by pentagrams are $0.1644, 0.1817$, and $0.2457$ for $b_{DA}=4, 3$, and $2$, respectively.
It can be observed that $\rho^*$ increases with decreasing $b_{DA}$ as discussed in Remark 1.
Fig. \ref{SIQNR_beta} and Fig. \ref{SIQNR_SNR} show the asymptotic SIQNR with various $\beta$ and $\gamma_0$.
In Fig. \ref{SIQNR_beta}, $\rho^*=0.2669, 0.3336, 0.4003$, and $0.4671$ for $\beta=0.2500, 0.3125, 0.3750$, and $0.4375$, respectively.
We verify that $\rho^*$ proportionally increases with $\beta$.
In Fig. \ref{SIQNR_SNR}, $\rho^*=0.2669, 0.2991, 0.3395$, and $0.3905$ for $\gamma_0=5,4,3$, and $2$ dB, which verifies that $\rho^*$ decreases with $\gamma_0$.

\begin{figure}[tb]
\centering\includegraphics[width=0.6\textwidth]{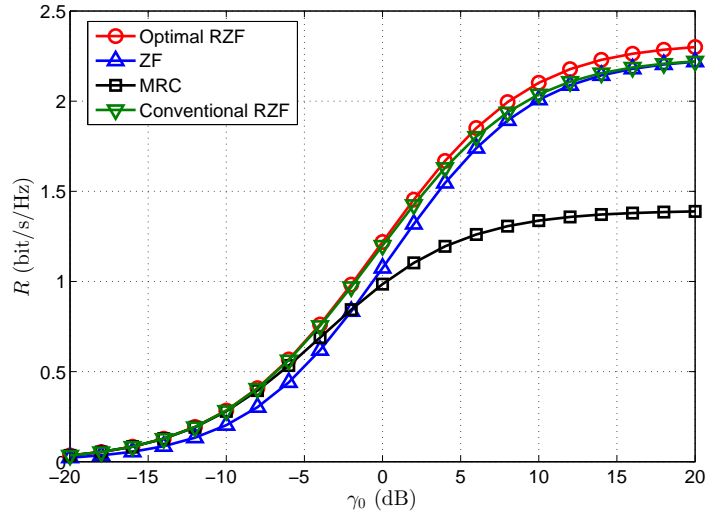}
\caption{Achievable rate versus SNR with optimal and conventional  RZF, ZF, and MRC precoders ($N=256$, $M=64$, $b_{DA}=1$, $b_{AD}=3$, and $\nu=0.5$).}
\label{rate_RZF_ZF_MRC_fig}
\end{figure}

Fig. \ref{rate_RZF_ZF_MRC_fig} compares the achievable rate per user with the optimal $\rho$ and other three typical values, i.e., $\rho=\frac{\beta}{\gamma_0}$ leading to a conventional RZF precoder, $\rho\rightarrow0$ leading to a ZF precoder, and $\rho\rightarrow \infty$ leading to an MRC precoder.
The performance gap between the optimal RZF and the conventional RZF is generally not that significant, but it increases with SNR.
This is because the effect of quantization noise becomes to dominate the performance at high SNR.
Thus, the regularization parameter is better to be optimized according to quantization distortion factors.
In addition, the optimal RZF precoding provides better rate performance than ZF precoding with $\gamma_0$ ranging from $-20$ to $20$ dB.
As for MRC precoding, it achieves almost the same rate as the optimal RZF at low SNRs since the channel noise significantly overwhelms the inter-user interference under this condition. However, the multiuser interference becomes dominating as SNR increases and thus the achievable rate with MRC precoder becomes much lower than the optimal RZF at high SNRs.

\begin{figure}[tb]
\centering\includegraphics[width=0.6\textwidth]{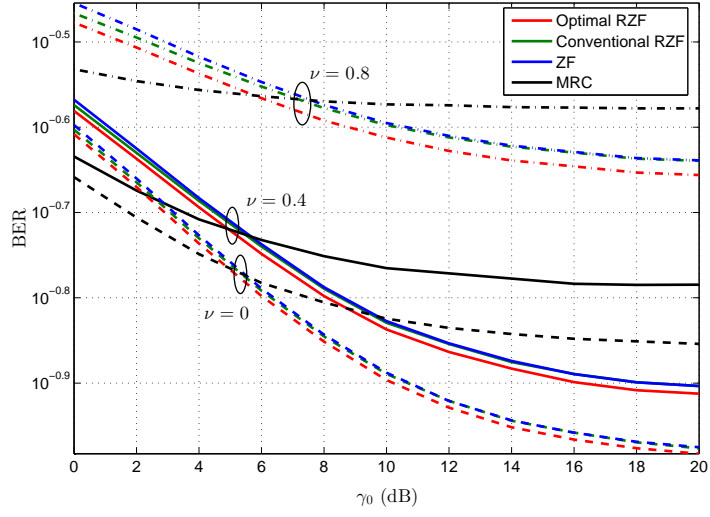}
\caption{BER versus SNR with optimal and conventional  RZF, ZF, and MRC precoders ($N=64$, $M=32$, $b_{DA}=1$, and $b_{AD}=3$).}
\label{BER_fig}
\end{figure}

Fig. \ref{BER_fig} compares the bit error rate (BER) of various precoding schemes using QPSK modulation. 
Error floor exists due to the use of 1-bit DACs and 3-bit ADCs.
Compared to conventional RZF and ZF precoders, the optimal RZF precoder improves the BER performance more significantly with larger $\nu$.
It implies that the optimal RZF precoding is more effective for spatially correlated massive MIMO channels using low-resolution DACs and finite-resolution ADCs.
Note that the BER values are high because we consider a heavily loaded scenario with $\beta=\frac{M}{N}=\frac{1}{2}$ and no channel coding is employed.
This allows us to better focus on evaluating the performance of the proposed optimal RZF precoder.

\begin{figure}[tb]
\centering\includegraphics[width=0.6\textwidth]{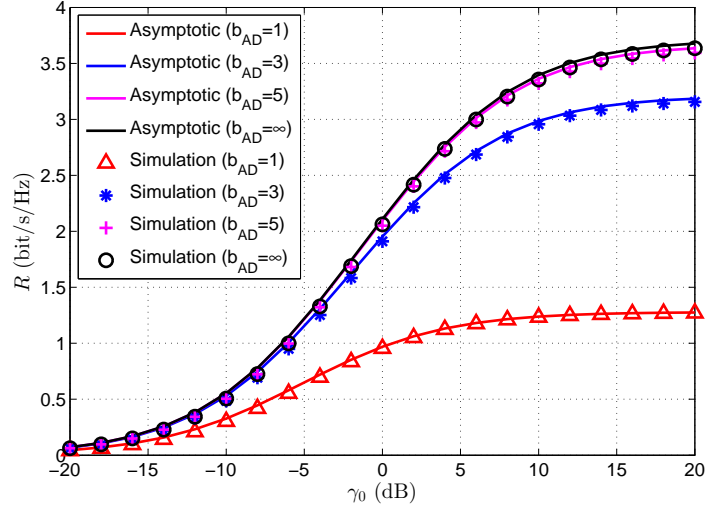}
\caption{Achievable rate of each user using 1-bit DACs and finite-resolution ADCs ($N=256, M=32$, and $\nu=0.3$).}
\label{rate_fig}
\end{figure}

Fig. \ref{rate_fig} shows the achievable rate of each user affected by finite-resolution ADCs.
Solid lines correspond to our derived asymptotic expression in \eqref{rate} and it matches well with the simulation results. Note that $R$ converges with increasing SNR since only one-bit information can be transmitted by each antenna with 1-bit DACs.
Besides, higher ADC resolution $b_{AD}$ provides larger $R$ and 5-bit ADCs achieve almost the same rate as infinite-resolution ones.
Note that at low SNRs, the rate loss caused by ADC quantization is marginal.

\begin{figure}[tb]
\centering\includegraphics[width=0.6\textwidth]{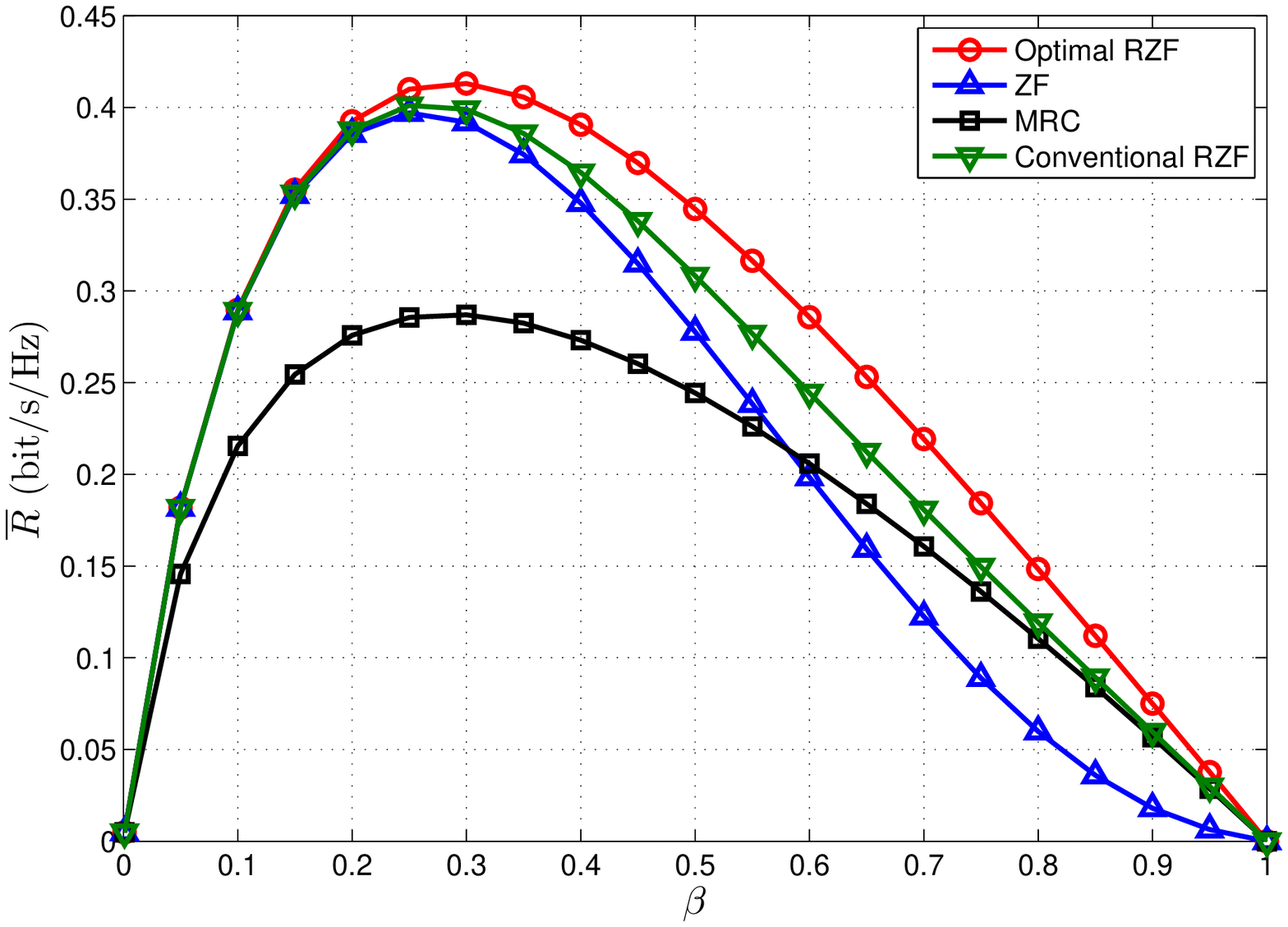}
\caption{Sum rate per antenna versus user loading ratio with optimal and conventional RZF, ZF, and MRC precoders ($\eta=1$, $b_{DA}=1$, $b_{AD}=3$, and $\nu=0.3$).}
\label{rate_beta_RZF_ZF_MRC_fig}
\end{figure}

\begin{figure}[tb]
\centering\includegraphics[width=0.6\textwidth]{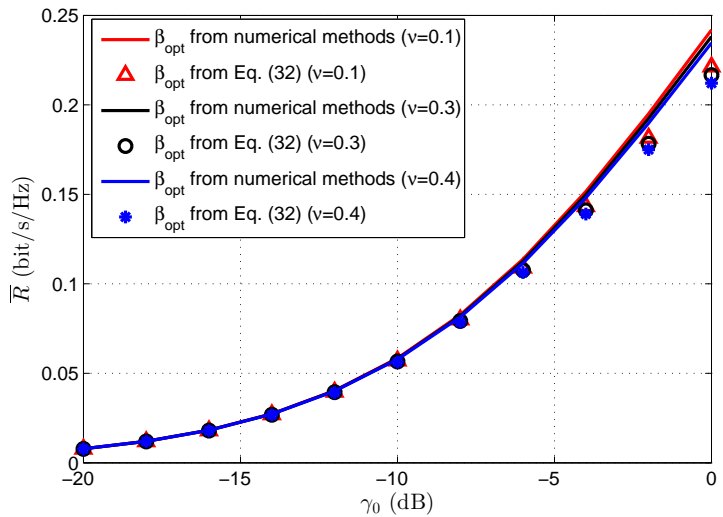}
\caption{Sum rate per antenna with $\beta^*$ under spatially correlated channels ($\eta=1$, $b_{DA}=1$, and $b_{AD}=3$).}
\label{Rate_beta_opt_cor}
\end{figure}

Fig. \ref{rate_beta_RZF_ZF_MRC_fig} compares the sum rate per antenna of the optimal and conventional RZF, ZF, and MRC precodings.
The SNR is $\gamma_0=10$~dB.
The optimal RZF precoding provides higher $\bar{R}$ than the other three precoders with $\beta$ ranging from 0 to 1, which implies that it is necessary to optimize the regularization paramater $\rho$.
With $\beta$ increasing from 0 to 1, $\bar{R}$ first increases since more users are served but then decreases due to higher pilot overhead.
Obviously, an optimal $\beta$ exists by maximizing $\bar{R}$.
Compared to the conventional RZF precoder, the optimal RZF precoder significantly improves the sum rate for $\beta\in(0.3,0.9)$.
This is because the optimal $\rho^*=\frac{ \left( \rho_{DA} \gamma_0 +1 \right) \beta}{(1-\rho_{DA})\gamma_0}$ differs more with the conventional $\rho=\frac{\beta}{\gamma_0}$ for larger $\beta$.
While for even larger $\beta\rightarrow 1$, we have $\overline{R}\rightarrow 0$ for $\eta=1$ and thus the performance gap tends small.

Although the closed-form $\beta^*$ in \eqref{beta_opt} is obtained for a special case without spatial correlation, Fig. \ref{Rate_beta_opt_cor} tests it for a massive MIMO channel with an exponential Toeplitz correlation matrix as in \eqref{r}.
Comparing the rates with the derived and exact $\beta^*$, we observe that the derived $\beta^*$ is still precise at low SNR with less significant correlations.
It implies that \eqref{beta_opt} can be applied to massive MIMO systems where the correlation is generally not severe.

\begin{figure}[tb]
\centering\includegraphics[width=0.6\textwidth]{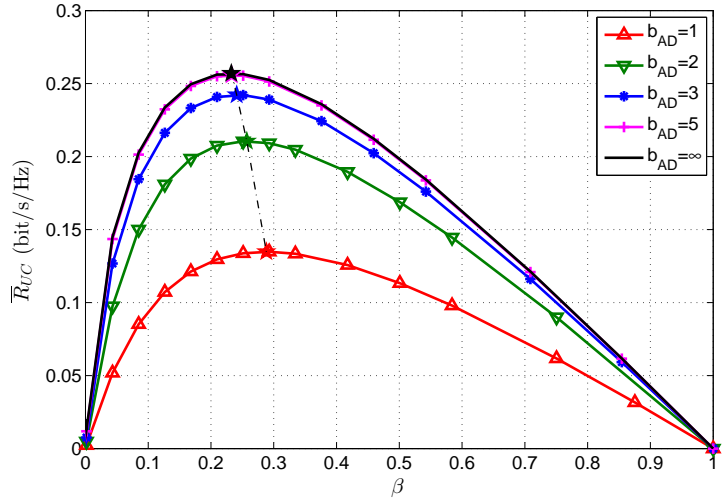}
\caption{Sum rate per antenna versus user loading ratio ($\gamma_0=0$ dB, $\eta=1$, and $b_{DA}=1$).}
\label{R_avg_beta_fig}
\end{figure}

\begin{figure}[tb]
\centering\includegraphics[width=0.6\textwidth]{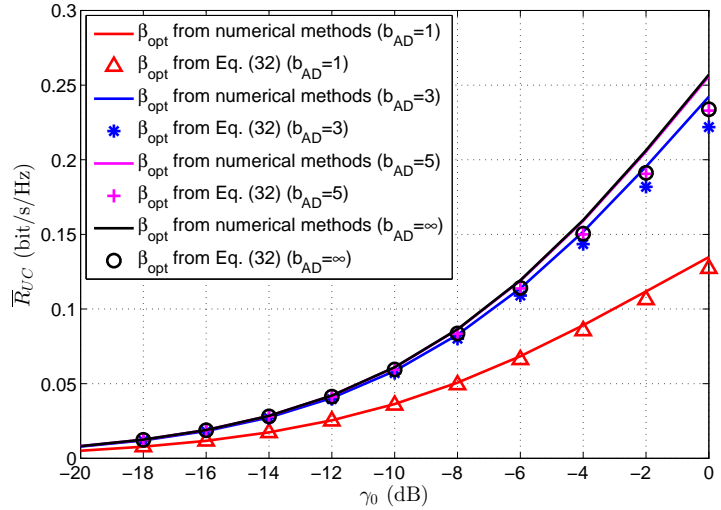}
\caption{Sum rate per antenna with $\beta^*$ under uncorrelated channels ($\eta=1$, $b_{DA}=1$).}
\label{R_avg_SNR_fig}
\end{figure}

We show the asymptotic rate performance without spatial correlation in the following.
Fig. \ref{R_avg_beta_fig} shows the sum rate per antenna under an uncorrelated channel.
The pentagram marks the maximum $\overline{R}_{UC}$ with $\beta^*$.
Specifically, $\beta^*$ equals $=0.2324,0.2330, 0.2409, 0.2570$, and $0.2881$, for $b_{AD}=\infty, 5,3,2$, and $1$, respectively.
We find that $\beta^*$ increases slightly when ADC resolution $b_{AD}$ decreases as discussed in Remark 3.
Fig. \ref{R_avg_SNR_fig} compares the performance of derived $\beta^*$ from \eqref{beta_opt} with exact value by numerical methods. It can be observed that $\overline{R}_{UC}$ with our derived $\beta^*$ is extremely close to that with exact one at low SNR.

\section{Conclusion}
In this paper, we investigate a downlink multiuser massive MIMO network and study the performance of low-resolution DACs equipped at BS and finite-resolution ADCs at use sides.
RZF precoding is used and transmit-side spatial correlation is considered.
By maximizing the asymptotic SIQNR, we derive the optimal regularization parameter which is found independent of the channel correlation and ADC quantization.
It implies that the optimal RZF precoding can be conducted at the BS without any knowledge of the ADC resolution at the user sides.
Moreover, by maximizing the sum rate per antenna, a closed-form solution of the optimal user loading ratio is obtained at low SNR.
Although the obtained $\beta^*$ is derived under a special case without correlation, it is verified also valid for slightly correlated channels.
Some extensions of this study, e.g., considering imperfect CSI and multi-antenna users with low-resolution ADCs, are of interest in future work.

\begin{appendices}

\section{}

The following lemma is recalled from \cite[Lemma 1]{cor0} for completeness.

\begin{lemma}
\label{lemma_ref}
In the large system limit, the following convergence of terms in \eqref{SIQNR} can be stated as follows:

1) Let $\mathbf{H}_k$ denote the channel matrix $\mathbf{H}$ removing the $k$th row $\mathbf{h}_{k}^{T}$ and we have
\begin{align}
\mathbf{h}_{k}^{T}\left(\mathbf{H}_k^H\mathbf{H}_k+\alpha\mathbf{I}_N\right)^{-1}\mathbf{h}^*_k \xrightarrow{a.s.} \xi,
\end{align}
where $\xi$ follows the same definition in \eqref{xi}.

2) Moreover,
\begin{align}
\mathbf{h}^T_k\left(\mathbf{H}_k^H\mathbf{H}_k+\alpha\mathbf{I}_N\right)^{-1}\mathbf{H}_k^H\mathbf{H}_k & \left(\mathbf{H}_k^H\mathbf{H}_k+\alpha\mathbf{I}_N\right)^{-1}\mathbf{h}^*_k
\xrightarrow{a.s.}
\frac{\beta E_{22}}{1-\beta E_{22}},
\end{align}
where $E_{22}$ follows the same definition in \eqref{E}.

3) As for constant $c$, it converges to
\begin{align}
c^2
\xrightarrow{a.s.}
-\frac{P(1+\xi)^2}{\beta \frac{\partial \xi}{\partial \rho}}.
\end{align}
\end{lemma}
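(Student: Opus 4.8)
The plan is to obtain all three limits from two standard pillars of asymptotic random matrix theory: the \emph{trace lemma} (a quadratic form $\mathbf{h}_k^T\mathbf{A}\mathbf{h}_k^*$ with $\mathbf{h}_k$ independent of $\mathbf{A}$ concentrates on $\textrm{Tr}\{\mathbf{R}\mathbf{A}\}$, which applies here because $\mathbf{h}_k^T=\tilde{\mathbf{h}}_k^T\mathbf{R}^{1/2}$ with $\tilde{\mathbf{h}}_k$ having i.i.d.\ unit-variance entries and $\textrm{Tr}\{\mathbf{R}\}=N$) and a \emph{deterministic equivalent} for the resolvent $\mathbf{A}_k\triangleq(\mathbf{H}_k^H\mathbf{H}_k+\alpha\mathbf{I}_N)^{-1}$. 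First I would establish $\mathbf{A}_k\asymp(\frac{M}{1+\xi}\mathbf{R}+\alpha\mathbf{I}_N)^{-1}$ by a leave-one-out argument over the rows of $\mathbf{H}_k$: applying Sherman--Morrison to each rank-one term $\mathbf{h}_j^*\mathbf{h}_j^T$ and the trace lemma to the resulting denominators, each sample contributes an effective $\mathbf{R}/(1+\xi)$. Taking $\textrm{Tr}\{\mathbf{R}\,\cdot\,\}$ of this equivalent, diagonalizing $\mathbf{R}$, and normalizing with $\rho=\alpha/N$, $\beta=M/N$ reproduces exactly the self-consistent equation \eqref{xi}. Since deleting the single row $\mathbf{h}_k^T$ is a rank-one perturbation that does not change the limit, Part~1 follows, and the fixed point also gives $E_{11}=\xi/(1+\xi)$ in the notation of \eqref{E}.

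For Part~2 I would again invoke the trace lemma to reduce the form to $\textrm{Tr}\{\mathbf{R}\mathbf{A}_k\mathbf{H}_k^H\mathbf{H}_k\mathbf{A}_k\}$, then use $\mathbf{H}_k^H\mathbf{H}_k=\mathbf{A}_k^{-1}-\alpha\mathbf{I}_N$ to rewrite it as $\textrm{Tr}\{\mathbf{R}\mathbf{A}_k\}-\alpha\,\textrm{Tr}\{\mathbf{R}\mathbf{A}_k^2\}$. The second trace I would handle by differentiation in $\alpha$: since $\mathbf{A}_k^2=-\partial_\alpha\mathbf{A}_k$ and $\textrm{Tr}\{\mathbf{R}\mathbf{A}_k\}$ converges to the analytic deterministic function $\xi(\alpha)$, its limit is $-\partial_\alpha\xi=-\tfrac1N\partial_\rho\xi$, so the whole expression tends to $\xi+\rho\,\partial_\rho\xi$; note this route silently encodes the usual $(1-\beta E_{22})^{-1}$ correction without a two-resolvent computation. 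It then remains to check $\xi+\rho\,\partial_\rho\xi=\frac{\beta E_{22}}{1-\beta E_{22}}$. I would get $\partial_\rho\xi=-(1+\xi)^3E_{12}/[1+\rho(1+\xi)^2E_{12}]$ by implicit differentiation of \eqref{xi}, and derive the algebraic identity $\beta E_{22}=\frac{\xi}{1+\xi}-\rho(1+\xi)E_{12}$ from \eqref{E} by substituting $\beta\lambda=[\rho(1+\xi)+\beta\lambda]-\rho(1+\xi)$ inside $E_{22}$ and using $E_{11}=\xi/(1+\xi)$. Both relations collapse $\xi+\rho\,\partial_\rho\xi$ and $\frac{\beta E_{22}}{1-\beta E_{22}}$ to the common value $[\xi-\rho(1+\xi)^2E_{12}]/[1+\rho(1+\xi)^2E_{12}]$.

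For Part~3 I would compute the denominator in \eqref{c} via $\textrm{Tr}\{\mathbf{H}(\mathbf{H}^H\mathbf{H}+\alpha\mathbf{I}_N)^{-2}\mathbf{H}^H\}=-\partial_\alpha\textrm{Tr}\{\mathbf{H}\mathbf{A}\mathbf{H}^H\}$ with the full resolvent $\mathbf{A}=(\mathbf{H}^H\mathbf{H}+\alpha\mathbf{I}_N)^{-1}$. The diagonal sum $\textrm{Tr}\{\mathbf{H}\mathbf{A}\mathbf{H}^H\}=\sum_k\mathbf{h}_k^T\mathbf{A}\mathbf{h}_k^*$ is evaluated by Sherman--Morrison and Part~1, each term tending to $\xi/(1+\xi)$, so the sum converges to $M\xi/(1+\xi)=\beta N\,\xi/(1+\xi)$. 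Differentiating in $\alpha$ and using $\partial_\alpha\xi=\tfrac1N\partial_\rho\xi$ then yields $\textrm{Tr}\{\mathbf{H}\mathbf{A}^2\mathbf{H}^H\}\to-\beta\,\partial_\rho\xi/(1+\xi)^2$, and inverting gives the stated $c^2\to-P(1+\xi)^2/(\beta\,\partial_\rho\xi)$.

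The main obstacle is the rigorous justification of the deterministic equivalent in the presence of $\mathbf{R}$, namely existence and uniqueness of the fixed point $\xi$ and the almost-sure convergence with a controlled spectral norm of $\mathbf{A}_k$, together with the step transferring almost-sure convergence of the analytic family $\alpha\mapsto\textrm{Tr}\{\mathbf{R}\mathbf{A}_k\}$ to convergence of its $\alpha$-derivative, which I would justify by a normal-families argument on a complex neighborhood of the positive axis. Since the statement is imported from \cite{cor0}, these analytic foundations may simply be cited; the remaining work is the bookkeeping that recasts the derivative limits $\xi+\rho\,\partial_\rho\xi$ and $\beta N\,\xi/(1+\xi)$ into the $E_{ij}$- and $\partial_\rho\xi$-forms quoted in the lemma.
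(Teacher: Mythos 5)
Your proposal is sound, but note that the paper itself does not prove this lemma at all: Appendix~A merely \emph{recalls} it from \cite[Lemma 1]{cor0}, so there is no in-paper derivation to match step by step. What you supply is a legitimate self-contained route via the standard deterministic-equivalent machinery, and its algebra is fully consistent with the identities the paper does derive and use later: your implicit-differentiation formula for $\partial_\rho\xi$ is equivalent to the paper's \eqref{xi_d} (the two forms are related by $(1+\xi)(1-\beta E_{22})=1+\rho(1+\xi)^2E_{12}$, which is a rearrangement of \eqref{xi0}), and the reduction $\xi+\rho\,\partial_\rho\xi=\beta E_{22}/(1-\beta E_{22})$ follows immediately from \eqref{xi0}. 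Your Part~3 likewise reproduces the stated limit: $\mathrm{Tr}\{\mathbf{H}\mathbf{A}\mathbf{H}^H\}\to M\xi/(1+\xi)$ via Sherman--Morrison and Part~1, and differentiation in $\alpha$ with $\partial_\alpha=\tfrac1N\partial_\rho$ gives $-\beta\,\partial_\rho\xi/(1+\xi)^2$ in the denominator of \eqref{c}. The one step deserving emphasis is the interchange of the almost-sure limit with $\partial_\alpha$ (equivalently, replacing a two-resolvent deterministic equivalent by differentiating the one-resolvent limit); you correctly flag that this needs a normal-families or Vitali-type argument on a complex neighborhood of $\alpha>0$, where the resolvents are uniformly bounded and analytic. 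Compared with the paper's choice to import the lemma, your approach buys a verifiable derivation at the cost of these analytic justifications, which for a Kronecker channel with bounded $\|\mathbf{R}\|$ are standard; either way the statement stands.
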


\section{}
\begin{lemma}
\label{lemma_PP}
Considering a typical positive definite Hermitian correlation matrix $\mathbf{R}$ with a Toeplitz structure, $\textrm{diag}\left(\mathbf{PP}^H\right)$ converges to
\begin{align}
\label{asPP}
\textrm{diag}\left(\mathbf{PP}^H\right)
\xrightarrow{a.s.}
\frac{P}{N}\mathbf{I}_N.
\end{align}
\end{lemma}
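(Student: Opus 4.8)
The plan is to combine two facts: the power normalization fixes the \emph{average} diagonal value, while the Toeplitz correlation forces every diagonal entry to coincide with that average. First I would recast $\mathbf{PP}^H$ in resolvent form. Setting $\mathbf{Q}\triangleq(\mathbf{H}^H\mathbf{H}+\alpha\mathbf{I}_N)^{-1}$ and using $\mathbf{Q}^{-1}=\mathbf{H}^H\mathbf{H}+\alpha\mathbf{I}_N$, the precoder \eqref{rzf} yields
\begin{equation}
\mathbf{PP}^H=c^2\mathbf{Q}\mathbf{H}^H\mathbf{H}\mathbf{Q}=c^2\left(\mathbf{Q}-\alpha\mathbf{Q}^2\right).
\end{equation}
Because $\textrm{Tr}\{\mathbf{PP}^H\}=P$ by \eqref{c}, the diagonal entries of $\mathbf{PP}^H$ sum to $P$; hence it suffices to show that they are asymptotically equal, which then pins each one to the average $\frac{P}{N}$.

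Second, I would show that each diagonal entry converges to that of a deterministic matrix sharing the eigenbasis of $\mathbf{R}$. Since $\mathbf{H}=\tilde{\mathbf{H}}\mathbf{R}^{\frac{1}{2}}$ with $\tilde{\mathbf{H}}$ having i.i.d.\ entries, the deterministic-equivalent calculus underlying Appendix~A (from \cite[Lemma~1]{cor0}), applied to the deterministic canonical vector $\mathbf{e}_n$ in place of a channel row, gives $[\mathbf{Q}]_{nn}\xrightarrow{a.s.}[\bar{\mathbf{Q}}]_{nn}$ for a deterministic equivalent $\bar{\mathbf{Q}}$; the diagonal of $\mathbf{Q}^2$ is then obtained from the derivative identity $\mathbf{Q}^2=-\frac{\partial\mathbf{Q}}{\partial\alpha}$, so $[\mathbf{Q}^2]_{nn}\xrightarrow{a.s.}-\frac{\partial}{\partial\alpha}[\bar{\mathbf{Q}}]_{nn}$. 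After rotating to the eigenbasis of $\mathbf{R}$, invariance of the law of $\tilde{\mathbf{H}}^H\tilde{\mathbf{H}}$ under conjugation by diagonal phase matrices forces $\bar{\mathbf{Q}}$ to commute with $\mathbf{R}$; i.e.\ $\bar{\mathbf{Q}}$, and therefore the deterministic equivalent of $\mathbf{PP}^H$, is a matrix function of $\mathbf{R}$.

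Third, I would invoke the Toeplitz structure. The eigenvector matrix $\mathbf{U}_R$ of a Toeplitz $\mathbf{R}$ has columns that are asymptotically Fourier modes, so every entry satisfies $|[\mathbf{U}_R]_{ni}|^2\to\frac{1}{N}$; equivalently, by Gray's asymptotic equivalence of Toeplitz and circulant matrices, any bounded matrix function $g(\mathbf{R})$ is itself asymptotically Toeplitz and its diagonal entries all tend to the common value $\frac{1}{N}\textrm{Tr}\{g(\mathbf{R})\}$. Applying this to $\bar{\mathbf{Q}}$ and to $-\frac{\partial\bar{\mathbf{Q}}}{\partial\alpha}$ shows that every diagonal entry of $\mathbf{PP}^H$ shares the single limit
\begin{equation}
c^2\left(\frac{1}{N}\textrm{Tr}\{\bar{\mathbf{Q}}\}+\frac{\alpha}{N}\textrm{Tr}\Big\{\frac{\partial\bar{\mathbf{Q}}}{\partial\alpha}\Big\}\right)=\frac{1}{N}\textrm{Tr}\{\mathbf{PP}^H\}=\frac{P}{N},
\end{equation}
which is exactly \eqref{asPP}.

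The main obstacle is the third step: rigorously passing from ``function of a Toeplitz matrix'' to ``asymptotically constant diagonal.'' This rests on Szeg\H{o}/Gray theory and requires the generating symbol of $\mathbf{R}$ (the bounded Poisson kernel of the exponential model $r_{ij}=\nu^{|i-j|}$ in \eqref{r}) together with an \emph{entrywise} deterministic equivalent, rather than the normalized-trace convergence that the standard results deliver directly. In particular, the edge eigenvectors of $\mathbf{R}$ deviate from pure Fourier modes, and one must verify that this does not spoil the convergence uniformly in $n$; establishing this uniform, entrywise almost-sure statement is the delicate technical point.
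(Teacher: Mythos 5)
Your proposal follows the same skeleton as the paper's proof: the identical resolvent decomposition $\mathbf{PP}^H=c^2\bigl(\mathbf{Q}-\alpha\mathbf{Q}^2\bigr)$ with $\mathbf{Q}=(\mathbf{H}^H\mathbf{H}+\alpha\mathbf{I}_N)^{-1}$, the same appeal to the asymptotic Toeplitz structure to force a constant diagonal, and the same use of the power constraint $\textrm{Tr}\{\mathbf{PP}^H\}=P$ to pin that constant at $P/N$. The one place you genuinely diverge is the middle step. The paper substitutes $\mathbf{H}^H\mathbf{H}$ by $M\mathbf{R}$ directly, justified by $\frac{1}{M}\tilde{\mathbf{H}}^H\tilde{\mathbf{H}}-\mathbf{I}\xrightarrow{a.s.}\mathbf{0}$ plus the Continuous Mapping Theorem, and then takes the diagonal of $(M\mathbf{R}+\alpha\mathbf{I}_N)^{-1}-\alpha(M\mathbf{R}+\alpha\mathbf{I}_N)^{-2}$; you instead pass through an entrywise deterministic equivalent $\bar{\mathbf{Q}}$ (obtained via the derivative identity $\mathbf{Q}^2=-\partial\mathbf{Q}/\partial\alpha$ and a phase-invariance argument showing $\bar{\mathbf{Q}}$ commutes with $\mathbf{R}$). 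Your route is the more defensible one: in the regime $M,N\to\infty$ with $M/N=\beta$ fixed, entrywise convergence of $\frac{1}{M}\tilde{\mathbf{H}}^H\tilde{\mathbf{H}}$ does not push through the inverse of a growing-dimension matrix, and the correct deterministic equivalent of $\mathbf{Q}$ carries a $(1+\xi)^{-1}$ correction that is absent from the paper's stated limit. Both arguments nevertheless reach the same conclusion, because the lemma only needs the limit to be a matrix function of the Toeplitz $\mathbf{R}$ (hence asymptotically constant on its diagonal) together with the trace normalization; the precise form of that function is irrelevant. The ``delicate point'' you flag at the end --- upgrading Szeg\H{o}/Gray weak-norm equivalence to a statement about every individual diagonal entry, uniformly in $n$ --- is exactly the step the paper also leaves implicit by citing the asymptotic-Toeplitz property of the inverse, so your treatment buys rigor (and an honest accounting of where the argument is thin) at the cost of invoking heavier random-matrix machinery.
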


\begin{proof}
Using \eqref{rzf}, we have
\begin{align}
\textrm{diag}\left(\mathbf{PP}^H \right)
& = c^2 \textrm{diag}\left(\left(\mathbf{H}^H\mathbf{H}+\alpha\mathbf{I}_N\right)^{-1}\mathbf{H}^H \mathbf{H} \left(\mathbf{H}^H\mathbf{H}+\alpha\mathbf{I}_N\right)^{-1} \right)
\nonumber
\\
& = c^2 \textrm{diag}\left(\left(\mathbf{H}^H\mathbf{H}+\alpha\mathbf{I}_N\right)^{-1}\left( \mathbf{H}^H \mathbf{H}+\alpha\mathbf{I}_N-\alpha\mathbf{I}_N \right)
 \left(\mathbf{H}^H\mathbf{H}+\alpha\mathbf{I}_N\right)^{-1} \right)
\nonumber
\\
& = c^2 \textrm{diag}\left(\left(\mathbf{H}^H\mathbf{H}+\alpha\mathbf{I}_N\right)^{-1}-  \alpha \left(\mathbf{H}^H\mathbf{H}+\alpha\mathbf{I}_N\right)^{-2} \right)
\nonumber
\\
& \xrightarrow{a.s.} c^2 \textrm{diag} \left(\left(M\mathbf{R}+\alpha\mathbf{I}_N\right)^{-1}
  - \alpha \left(M\mathbf{R}+\alpha\mathbf{I}_N\right)^{-2} \right)
\label{asHH}
\\
& \triangleq  d \mathbf{I}_N
\label{asDiag}
,
\end{align}
where \eqref{asHH} uses \eqref{H} and the fact that
$\frac{1}{M}\tilde{\mathbf{H}}^H\tilde{\mathbf{H}}-\mathbf{I} \xrightarrow{a.s.} \mathbf{0} $ due to the Central Limit Theorem.
According to the Continuous Mapping Theorem \cite{theorem}, the convergence preserves for continuous matrix functions like, $(\cdot)^{-1}$ and $\textrm{diag}(\cdot)$, of nonsingular matrices with positive definite Hermitian matrix $\mathbf{R}$ and $\alpha > 0$.
The matrix within the function $\textrm{diag}(\cdot)$ in \eqref{asHH} is still a Toeplitz matrix since the inverse of a positive definite Toeplitz matrix is asymptotically still Toeplitz \cite[Theorem 4.3]{theorem}.
Thus, it becomes a scaled identity matrix after taking the diagonal entries as in \eqref{asDiag}, where $d$ is defined as a constant (absorbing $c$) which can thus be determined by checking the power constraint of the term within the brackets, i.e., $\textrm{Tr}\left\{\mathbf{P}\mathbf{P}^{H}\right\}=P$, yielding $d = \frac{P}{N}$ in \eqref{asPP}.
\end{proof}

\section{Proof of Theorem~\ref{theorem_gamma}}

The proof of Theorem~\ref{theorem_gamma} applies Lemmas~\ref{lemma_ref} and \ref{lemma_PP}.
The value of \eqref{SIQNR} is calculated term-by-term. First concerning the numerator in \eqref{SIQNR}, the energy of the desired signal follows
\begin{align}
\left|\mathbf{h}_{k}^{T}\mathbf{p}_k\right|^2 
\label{S_F}
& = c^2\left|\mathbf{h}_{k}^{T}\left(\mathbf{H}^H\mathbf{H}+\alpha\mathbf{I}_N\right)^{-1}\mathbf{h}^*_k\right|^2
\\
\label{S_d}
& =   c^2 \frac{\left|\mathbf{h}_{k}^{T}\left(\mathbf{H}_k^H\mathbf{H}_k+\alpha\mathbf{I}_N\right)^{-1}\mathbf{h}^*_k\right|^2}{\left(1+\mathbf{h}_{k}^{T}\left(\mathbf{H}_k^H\mathbf{H}_k+\alpha\mathbf{I}_N\right)^{-1}\mathbf{h}^*_k\right)^2}
\\
\label{S_L13}
& \xrightarrow{a.s.} -\frac{P \xi^2}{\beta \frac{\partial \xi}{\partial \rho}}
,
\end{align}
where \eqref{S_F} is obtained by substituting \eqref{rzf}, \eqref{S_d} uses the matrix inversion lemma that

\begin{align}
\big(\mathbf{H}^H\mathbf{H}+\alpha\mathbf{I}_N\big)^{-1}\mathbf{h}^*_k
=&\big(\mathbf{H}_k^H\mathbf{H}_k+\alpha\mathbf{I}_N+\mathbf{h}^*_k\mathbf{h}^T_k\big)^{-1}\mathbf{h}^*_k
\nonumber\\
=&\big(\mathbf{H}_k^H\mathbf{H}_k+\alpha\mathbf{I}_N\big)^{-1}\mathbf{h}^*_k
-\frac{\big(\mathbf{H}_k^H\mathbf{H}_k+\alpha\mathbf{I}_N\big)^{-1} \mathbf{h}^*_k\mathbf{h}^T_k \big(\mathbf{H}_k^H\mathbf{H}_k+\alpha\mathbf{I}_N\big)^{-1}\mathbf{h}^*_k}{1+\mathbf{h}^T_k\big(\mathbf{H}_k^H\mathbf{H}_k+\alpha\mathbf{I}_N\big)^{-1}\mathbf{h}^*_k}
\nonumber\\
=&\frac{\big(\mathbf{H}_k^H\mathbf{H}_k+\alpha\mathbf{I}_N\big)^{-1}\mathbf{h}^*_k}{1+\mathbf{h}^T_k\big(\mathbf{H}_k^H\mathbf{H}_k+\alpha\mathbf{I}_N\big)^{-1}\mathbf{h}^*_k},
\label{H_d}
\end{align}
and \eqref{S_L13} utilizes 1) and 3) in Lemma~\ref{lemma_ref}, applying that convergence preserves for the continuous function according to the Continuous Mapping Theorem \cite{theorem}.
According to \eqref{xi} and \eqref{E}, we have
\begin{align}
\label{xi_d}
\frac{\partial \xi}{\partial \rho}=-\frac{(1+\xi)^2E_{12}}{1-\beta E_{22}}.
\end{align}
Substituting \eqref{xi_d} into \eqref{S_L13}, the desired signal power converges to
\begin{align}
\left|\mathbf{h}_{k}^{T}\mathbf{p}_k\right|^2
\xrightarrow{a.s.}
\frac{P\xi^2(1-\beta E_{22})}{(1+\xi)^2\beta E_{12}}.
\label{S}
\end{align}

Then, we consider the multiuser interference in the denominator of \eqref{SIQNR}.
It follows that
\begin{align}
\sum_{j\neq k}\left|\mathbf{h}_{k}^{T}\mathbf{p}_j\right|^2
\label{I_F}
&= c^2 \sum_{j\neq k}\left|\mathbf{h}_{k}^{T}\left(\mathbf{H}^H\mathbf{H}+\alpha\mathbf{I}_N\right)^{-1}\mathbf{h}^*_j\right|^2
\\
\nonumber
&= c^2 \mathbf{h}_{k}^{T}\left(\mathbf{H}^H\mathbf{H}+\alpha\mathbf{I}_N\right)^{-1}\mathbf{H}^H_k \mathbf{H}_k \left(\mathbf{H}^H\mathbf{H}+\alpha\mathbf{I}_N\right)^{-1} \mathbf{h}_{k}^*
\\
\label{I_d}
&= c^2 \frac{\mathbf{h}^T_k\left(\mathbf{H}_k^H\mathbf{H}_k+\alpha\mathbf{I}_N\right)^{-1}\mathbf{H}_k^H\mathbf{H}_k\left(\mathbf{H}_k^H\mathbf{H}_k+\alpha\mathbf{I}_N\right)^{-1}\mathbf{h}^*_k}
{\left(1+\mathbf{h}^T_k\left(\mathbf{H}_k^H\mathbf{H}_k+\alpha\mathbf{I}_N\right)^{-1}\mathbf{h}^*_k\right)^2}
\\
\label{I_L23}
& \xrightarrow{a.s.} - \frac{P E_{22}}{(1-\beta E_{22})\frac{\partial \xi}{\partial \rho}}\\
\label{I_u}
&= \frac{PE_{22}}{(1+\xi)^2E_{12}}
,
\end{align}
where \eqref{I_F} is obtained by substituting \eqref{rzf}, \eqref{I_d} follows by \eqref{H_d}, \eqref{I_L23} utilizes Lemma~\ref{lemma_ref}, and \eqref{I_u} uses \eqref{xi_d}.

For the term of DAC quantization distortion, we substitute \eqref{cor_DA} and \eqref{H} and have
\begin{align}
\mathbf{h}_{k}^{T}\mathbb{E}\left\{\mathbf{n}_{DA}\mathbf{n}_{DA}^{H}\right\}\mathbf{h}_{k}^{*}
&=\rho_{DA}\tilde{\mathbf{h}}_{k}^{T} \mathbf{R}^{\frac{1}{2}}  \textrm{diag}\left(\mathbf{PP}^H\right) \mathbf{R}^{\frac{1}{2}} \tilde{\mathbf{h}}_{k}^{*}
\nonumber
\\
&\xrightarrow{a.s.} \rho_{DA} \frac{P}{N} \tilde{\mathbf{h}}_{k}^{T}   \mathbf{R}  \tilde{\mathbf{h}}_{k}^{*}
\label{PP}
\\
&\xrightarrow{a.s.} \rho_{DA} \frac{P}{N} \textrm{Tr}\{\mathbf{R}\}
\label{I_DA0}
\\
&= \rho_{DA} P,
\label{I_DA}
\end{align}
where $\tilde{\mathbf{h}}_{k}^{T}$ denotes the $k$th row of $\tilde{\mathbf{H}}$
and \eqref{PP} uses the asymptotical property of $\textrm{diag}\left(\mathbf{PP}^H\right)$ in Lemma~\ref{lemma_PP},
\eqref{I_DA0} uses \cite[Corollary 1]{matrix2} which implies that
\begin{align}
\frac{1}{\sqrt{N}} \tilde{\mathbf{h}}_{k}^{T}   \mathbf{R} \frac{1}{\sqrt{N}}  \tilde{\mathbf{h}}_{k}^{*} \xrightarrow{a.s.} \frac{1}{N} \textrm{Tr}\{\mathbf{R}\},
\end{align}
and \eqref{I_DA} comes from the normalization constraint that $\textrm{Tr}\{\mathbf{R}\}=N$.

While for the term of ADC quantization distortion in \eqref{SIQNR}, we substitute \eqref{DAC} and \eqref{y} into \eqref{cor_AD} and get
\begin{align}
\mathbb{E}\{|n_{AD,k}|^2\}
=&\rho_{AD}(1-\rho_{AD})
\left[(1-\rho_{DA})\mathbf{h}_{k}^{T}\mathbf{PP}^H \mathbf{h}_{k}^{*}\!+\!\mathbf{h}_{k}^{T}\mathbb{E}\left\{\mathbf{n}_{DA}\mathbf{n}_{DA}^{H}\right\}\mathbf{h}_{k}^{*}\!+\!\sigma_n^2\!\right]
\nonumber
\\
 =& \rho_{AD}(1-\rho_{AD})
\left[ (1-\rho_{DA}) \frac{\xi^2P+(1-\xi^2)\beta P E_{22}}{(1+\xi)^2\beta E_{12}}+\rho_{DA}P+\sigma_n^2\right],
\label{I_AD}
\end{align}
where \eqref{I_AD} is obtained by substituting \eqref{S}, \eqref{I_u} and \eqref{I_DA}.

Thus far, by substituting \eqref{S}, \eqref{I_u}, \eqref{I_DA} and \eqref{I_AD} into \eqref{SIQNR},  and using the following equality as
\begin{align}
\label{xi0}
\xi(1-\beta E_{22})=\rho(1+\xi)^2E_{12}+\beta E_{22},
\end{align}
which comes from
\begin{align}
\xi
&=\int \frac{(1+\xi)\lambda}{\rho(1+\xi)+\beta \lambda} \textrm{d}\Lambda(\lambda)
=\int \frac{\rho(1+\xi)^2\lambda+\beta \lambda^2+ \beta\xi \lambda^2}{[\rho(1+\xi)+\beta \lambda]^2}\textrm{d}\Lambda(\lambda)
\label{xi_2}
=\rho(1+\xi)^2E_{12}+\beta E_{22}+\beta\xi E_{22},
\end{align}
we finally complete the proof of Theorem 1.

\section{}
The following presents detailed derivations of $\xi$, $E_{12}$, and $E_{22}$ for simulation.
For the exponential Toeplitz correlation matrix $\mathbf{R}$ as given in \eqref{r}, we have \cite{Toeplitz}
\begin{align}
\label{toe}
\lim_{N\rightarrow\infty}\frac{1}{N}\sum_{n=1}^N F(\lambda_n) \xrightarrow{a.s.} \frac{1}{2\pi} \int_0^{2\pi} F(f(w)) \textrm{d} w,
\end{align}
where $\lambda_n~(n=1,2,...,N)$ denotes the eigenvalues of $\mathbf{R}$, $F(\cdot)$ denotes any continuous function in the support of $f(w)$, and $f(w)$ is the spectral density of $r_{ij}$ given by
\begin{align}
f(w)&=\lim_{N\rightarrow\infty} \sum_{n=-N+1}^{N-1} \nu^{|n|}e^{jnw}
=\frac{1-\nu^2}{1-2\nu\cos w+\nu^2}.
\label{f}
\end{align}
Using \eqref{toe} to evaluate the expectation in \eqref{xi}, $\xi$ can be rewritten as
\begin{align}
\xi
&=\mathbb{E}_{\lambda} \left \{ \frac{\lambda(1+\xi)}{\rho(1+\xi)+\beta\lambda} \right \}
\nonumber
\\
&=\lim_{N\rightarrow\infty}\frac{1}{N}\sum_{n=1}^N \frac{\lambda_n(1+\xi)}{\rho(1+\xi)+\beta\lambda_n}
\nonumber
\\
&\xrightarrow{a.s.}\frac{1}{2\pi} \int_0^{2\pi} \frac{f(w)(1+\xi)}{\rho(1+\xi)+\beta f(w)} \textrm{d}w
\nonumber
\\
&= \frac{1-\nu^2}{2\pi} \int_0^{2\pi} \frac{1}{a+b\cos w} \textrm{d}w
\label{xi_toe_f}
\\
&=\frac{1-\nu^2}{\sqrt{a^2-b^2}},
\label{xi_toe}
\end{align}
where \eqref{xi_toe_f} is obtained by substituting \eqref{f} and we define
\begin{align}
\label{a}
a\triangleq\rho(1+\nu^2)+\frac{\beta (1-\nu^2)}{1+\xi},
\end{align}
and
\begin{align}
\label{b}
b\triangleq -2\rho\nu.
\end{align}
\eqref{xi_toe} comes from \cite[Eq. 3.661 (4)]{table}, which is written as follows
\begin{align}
&\frac{1}{2}\int_0^{2\pi} \frac{1}{(a+b\cos x)^{n+1}} \textrm{d}x
\!=\!\frac{\pi}{2^n(a\!+\!b)^n\sqrt{a^2-b^2}}\!\sum_{k=0}^n\!\frac{(2n\!-\!2k\!-\!1)!!(2k\!-\!1)!!}{(n\!-\!k)!k!}\left(\!\frac{a\!+\!b}{a\!-\!b}\!\right)\!^k
\label{table3661}
.
\end{align}
Note that since $\xi$ is also involved in $a$, the calculation of $\xi$ still cannot be directly solved through \eqref{xi_toe} in closed form.
Fortunately, we can resort to numerical methods to solve the equalities \eqref{xi_toe}, \eqref{a}, and \eqref{b}.

As for $E_{12}$, using \eqref{E} and \eqref{toe}, it yields
\begin{align}
E_{12}
&=\frac{1}{2\pi} \int_0^{2\pi} \frac{f(w)}{[\rho(1+\xi)+\beta f(w)]^2 }\textrm{d}w
\nonumber
\\
&=\frac{1-\nu^2}{2\pi(1+\xi)^2} \int_0^{2\pi} \frac{A+B\cos w}{(a+b\cos w)^2} \textrm{d}w
\label{E12_toe_f}
\\
&=\frac{(1-\nu^2)(aA-bB)}{2\pi(1+\xi)^2(a^2-b^2)} \int_0^{2\pi} \frac{1}{a+b\cos w} \textrm{d}w
\label{E12_toe_2}
\\
&=\frac{(1-\nu^2)(aA-bB)}{(1+\xi)^2(a^2-b^2)^{\frac{3}{2}}}
\label{E12_toe}
,
\end{align}
where \eqref{E12_toe_f} uses \eqref{f} and the definitions $A\triangleq1+\nu^2$ and $B\triangleq-2\nu$,
\eqref{E12_toe_2} comes from \cite[Eq. 2.554 (1)]{table}, which is rewritten as
\begin{align}
&\int \frac{A\!+\!B\cos x}{(a\!+\!b\cos x)^n} \textrm{d}x
\!=\!\frac{1}{(\!n\!-\!1\!)(\!a^2\!-\!b^2\!)} \left[\frac{(\!aB\!-\!Ab\!)\sin x}{(a\!+\!b\cos x)^{n-1}}
\!+\! \int\frac{(\!Aa\!-\!bB\!)(n\!-\!1)\!+\!(\!n\!-\!2\!)(aB\!-\!bA)\cos x}{(a+b\cos x)^{n-1}} \textrm{d}x \right]
\label{table2554}
,
\end{align}
and \eqref{E12_toe} utilizes \eqref{table3661}.
Similarly, $E_{22}$ can be derived as 
\begin{align}
E_{22}
&=\frac{1}{2\pi} \int_0^{2\pi} \frac{f^2(w)}{[\rho(1+\xi)+\beta f(w)]^2 }\textrm{d}w
\nonumber
\\
&=\frac{1}{2\pi} \int_0^{2\pi} \frac{\left(\frac{1-\nu^2}{1-2\nu\cos w+\nu^2}\right)^2}{\left[\rho(1+\xi)+\beta \frac{1-\nu^2}{1-2\nu\cos w+\nu^2}\right]^2 }\textrm{d}w
\label{E22_f}\\
&=\frac{1}{2\pi} \int_0^{2\pi} \frac{(1-\nu^2)^2}{\left[\rho(1\!+\!\xi)(1\!-\!2\nu\cos w\!+\!\nu^2)\!+\!\beta(1\!-\!\nu^2)\right]^2} \textrm{d}w
\nonumber\\
&=\!\frac{(1-\nu^2)^2}{2\pi(1\!+\!\xi)^2} \int_0^{2\pi}\!\!\!
 \frac{1}{\left[\!-\!2\rho\nu\cos w\!+\!\rho(1\!+\!\nu^2)\!+\!\frac{\beta(1\!-\!\nu^2)}{1\!+\!\xi}\right]^2} \textrm{d}w
\nonumber\\
&\triangleq\frac{(1-\nu^2)^2}{2\pi(1+\xi)^2} \int_0^{2\pi} \frac{1}{(a+b\cos w)^2} \textrm{d}w
\label{E22_toe_f}
\\
&=\frac{a(1-\nu^2)^2}{2\pi(1+\xi)^2(a^2-b^2)} \int_0^{2\pi} \frac{1}{a+b\cos w} \textrm{d}w
\label{E22_toe_2}
\\
&=\frac{a(1-\nu^2)^2}{(1+\xi)^2(a^2-b^2)^{\frac{3}{2}}}
\label{E22_toe}
,
\end{align}
where \eqref{E22_f} is obtained by utilizing $f(w)$ given in \eqref{f}, \eqref{E22_toe_f} uses the definitions of $a$ and $b$ in \eqref{a} and \eqref{b} respectively,
\eqref{E22_toe_2} uses \cite[Eq. 2.554 (3)]{table}, which is rewritten as
\begin{align}
&\int \frac{1}{(a+b\cos x)^{n}} \textrm{d}x
=-\frac{1}{(n-1)(a^2-b^2)}
\left[ \frac{b\sin x}{(a+b\cos x)^{n-1}}-\int \frac{(n-1)a-(n-2)b\cos x}{(a+b\cos x)^{n-1}} \textrm{d}x \right]
\label{table2554_3}
,
\end{align}
and \eqref{E22_toe} comes from \eqref{table3661}.

\end{appendices}






%

\end{document}